\pdfoutput=1
\documentclass[article,11pt,authoryear]{elsarticle}
\usepackage{mydef}
\usepackage{amstext}
\usepackage[hide]{ed}
\usepackage{tikz}
\usetikzlibrary{calc,decorations.markings}
\usetikzlibrary{arrows,patterns,positioning}
\usetikzlibrary{shapes.geometric, backgrounds}

\def\amt{\alpha_1^-}
\def\bpo{\beta_1^+}
\def\bpt{\beta_2^+}
\def\BP{[\amt e^{-X/2} + (\bpo X + \bpt) e^{X/2}]}
\allowdisplaybreaks[3]

\begin{document}

\begin{frontmatter}

\title{Filling the gaps smoothly}

\author[nyu]{A.~Itkin\corref{cor1}}
\ead{aitkin@nyu.edu}
\cortext[cor1]{Corresponding author}

\author[mit,courant]{A. Lipton}
\ead{alexlipt@mit.edu}

\address[nyu]{Tandon School of Engineering, New York University, \\
6 Metro Tech Center, RH 517E, Brooklyn NY 11201, USA}

\address[mit]{Connection Science \& Engineering, Massachusetts Institute of Technology, \\ 77 Massachusetts Avenue, Cambridge MA 02139, USA}
\address[courant]{Courant Institute of Mathematical Sciences, New York University, \\ 251 Mercer Steet, New York NY 10012, USA}

\begin{abstract}
The calibration of a local volatility models to a given set of option prices is a classical problem of mathematical finance. It was considered in multiple papers where various solutions were proposed. In this paper an extension of the approach proposed in \cite{LiptonSepp2011iv} is developed by i) replacing a piecewise constant local variance construction with a piecewise linear one, and ii) allowing non-zero interest rates and dividend yields. Our approach remains analytically tractable; it combines the Laplace transform in time with an analytical solution of the resulting spatial equations in terms of Kummer's degenerate hypergeometric functions.
\end{abstract}

\begin{keyword}
local volatility surface \sep piece-wise linear variance \sep Laplace transform \sep arbitrage-free interpolation

\JEL C6, C61, G17
\end{keyword}

\end{frontmatter}

The local volatility model introduced by \cite{Dupire:94}and \cite{derman/kani:94} is a classical model of mathematical finance. The calibration of the local volatility (LV) surface to the market data, representing either prices of European options or the corresponding implied volatilities for a given set of strikes and maturities, drew a lot of attention over the past two decades. Various approaches to solving this important problem were proposed, see, e.g., \cite {AndreasenHuge2010, LiptonSepp2011iv, ItkinSigmoid2015} and references therein. Below, we refer to \cite{LiptonSepp2011iv} as LS2011 for the sake of brevity~\footnote{We emphasize that the solution proposed in \cite {AndreasenHuge2010} is static in nature, while the solution developed in LS2011 is fully dynamic.}.

There are two main approaches to solving the calibration problem. The first approach attempts to construct a continuous implied volatility (IV) surface matching the market quotes by using either some parametric or non-parametric regression, and then generates the corresponding LV surface via the well-known Dupire formula, see, e.g., \cite{ItkinSigmoid2015} and references therein. To be practically useful, this construction should guarantee no arbitrage for all strikes and maturities, which is a serious challenge for any model based on interpolation. If the no-arbitrage condition is satisfied, then the LV surface can be calculated using \eqref{lv} below, which is equivalent to, but more convenient than, the original Dupire formula. The second approach relies on the direct solution of the Dupire equation using either analytical or numerical methods. The advantage of this approach is that it guarantees no-arbitrage. However, the problem of the direct solution can be ill-posed, \cite{Coleman2001}, and is rather computationally expensive.

An additional difficulty with both approaches is that the calibration algorithm has to be fast in order to be practically useful. On the one hand, analytical or numerical solutions of the Dupire equation are naturally numerically expensive. On the other hand, building a no-arbitrage IV surface could also be surprisingly numerically challenging, because it requires solving a rather involved constrained optimization problem, see \cite{ItkinSigmoid2015}. An additional complication arises from the fact that in the wings the implied variance surface should be linear in the normalized strike, \cite{Lee2004}.

In this paper we extend the approach proposed in LS2011, which is based on the direct solution of the transformed Dupire equation. In LS2011 a piecewise constant LV surface is chosen, and an efficient semi-analytical method for calibrating this surface to the sparse market data is proposed. However, one can argue that ideally the LV function should be continuous in the log-strike space. Below we demonstrate how to extend LS2011 approach by assuming that the local variance is piecewise linear in the log-strike space, so that the corresponding LV surface is continuous in the strike direction (but not in the time direction). While derivatives of the LV function with respect to strike have discontinuities, the option prices, deltas and gammas are continuous. This is to compare with LS2011 where the option prices and deltas are continuous while the option gammas are discontinuous. We also allow for non-zero interest rates and proportional dividends.

The rest of the paper is organized as follows. Section~\ref{DupEq} introduces the Dupire equation and discusses a general approach to constructing the LV surface. Section~\ref{DupireSec} considers all necessary steps for solving the Dupire equation. Section~\ref{secI12} introduces a no-arbitrage interpolation of the source term, which naturally appears when the Laplace transform in time is used, and shows that using this interpolation all the integrals containing this source term can be obtained in a closed form.
Section~\ref{small} considers a special case when the slope of the local variance on some interval is small, so the linear local variance function on this interval becomes flat. Section~\ref{largeA} discusses various asymptotic results which are useful for constructing the general solution of the Dupire equation. Section~\ref{calib} is devoted to the calibration of the model and also describes how to get an educated initial guess for the optimizer. Since computing the inverse Laplace transform could be expensive for small time intervals, Section~\ref{short} describes an asymptotic solution obtained in this limit in \cite{Gatheral2012SmallT} and shows how to use it for our purposes. Section~\ref{ourRes} describes numerical results for a particular set of market data. The final Section concludes. Some additional proofs and derivations are given in two appendices.

\section{Local volatility surface \label{DupEq}}
As a general building block for constructing the local volatility surface we consider Dupire's (forward) equation for the put option price $P$ which is a function of the strike price $K$ and the time to maturity $T$, \cite{Dupire:94}. We assume that the underlying stock process $S_t$ under the risk neutral measure is governed by the following stochastic differential equation
\begin{equation*}
dS_t = (r-q)S_t dt + \sigma(S_t,t) dW_t, \qquad S_0 = S,
\end{equation*}
\noindent where $r \ge 0$ is a constant risk free rate, $q \ge 0$ is a constant continuous dividend yield, $\sigma$ is a given local volatility function, and $W_t$ is the standard Brownian motion. Following \cite{Tysk2012}, we also assume that if $S_t$ can reach 0 in finite time, then 0 is an absorbing barrier. The Dupire equation for the put $P(K,T)$ reads, \cite{Tysk2012}
\begin{align} \label{Dupire}
P_T &= \left\{\dfrac{1}{2}\sigma^2(K,T) K^2 \sop{}{K} - (r-q)K \fp{}{K} - q \right\}P, \nonumber  \\
P(K,0) &= (K - S_0)^+, \quad (K,T) \in (0,\infty) \times [0,\infty), \\
P(0,T) &= 0, \quad P(K,T)_{K \uparrow \infty} = K D, \quad D = e^{-r T}, \nonumber
\end{align}
\noindent where $S_0 = S_t|_{t=0}$.

If the market quotes for $P(K,T)$ are known for all $K,T$, then the LV function $\sigma(K,T)$ can be uniquely determined everywhere by inverting \eqref{Dupire}~\footnote{If the call option market prices are given for some strikes and maturities, we can use call-put parity in order to convert them to put prices, since for calibration we usually use vanilla European option prices.}.
However, in practice, the known set of market quotes is a discrete set of pairs $(K_i,T_j),  \ i=1,\ldots,n_j, j=1,\ldots,M$, where $n_j$ is the total number of known quotes for the maturity $T_j$, which obviously doesn't cover all $K,T$. So the form of $\sigma(K,T)$ remains unknown.

In order to address this issue, it is customary to choose a functional form of $\sigma(K,T)$ for the corresponding time slice. For instance, in LS2011 $\sigma(K,T)$ is assumed to be a piece-wise constant function of $K,T$. The authors propose a general methodology of solving \eqref{Dupire} for their chosen explicit form of $\sigma(K,T)$ by using the Carson-Laplace transform in time and Green's function method in space. This opens the door for using a version of the least-square method for the calibration routine.
Of course, by construction, it makes the whole local volatility surface discontinuous at the boundaries of the tiles, and flat in the wings. While the former feature, in itself, is not necessarily an issue, but should be avoided if possible, the latter feature is somewhat more troubling, since, it is shown in \cite{Friz2013,Friz2015}, that the asymptotic behavior of the local variance is linear in the log strike at both $K \to \infty$ and $K \to 0$. While the result for $K \to 0$ is shown to be true at least for the Heston and Stein-Stein models, the result for $K \to \infty$ directly follows from Lee's moment formula for the implied variance $v_I$, \cite{Lee2004}, and the representation of $\sigma^2$ via the total implied variance $w = v_I T$, \cite{Lipton2001, Gatheral2006}
 \begin{equation} \label{lv}
w_L \equiv \sigma^2(T, K) T = \dfrac{T \dd_T w }{\left(1-\frac{X\dd_X w }{2 w }\right)^2
- \frac{(\dd_X w)^2}{4}\left(\frac{1}{ w }+\frac{1}{4}\right)+\frac{\dd^2_X w }{2}},
\end{equation}
\noindent where $X = \log K/F$ and $F = S e^{(r-q)T}$ is the stock forward price. Therefore, having a flat local volatility deep in the wings should be avoided if possible.

That is why, in this paper, we consider a continuous, piecewise linear local variance $v = \sigma^2(X,T)$ in the spatial variable $X$ for a fixed $T=const$. This allows us to match the asymptotic behavior of $v$ in the wings as well as build the whole surface which is much smoother than in the piecewise constant case. Also, in LS2011 the interest rates and dividends are assumed to be zero, while here we take them into account.

\section{Solution of Dupire's equation \label{DupireSec}}
Introducing a new independent variable $X$ and a new dependent variable $B(X,T) = e^{-X/2} (K D - P(X,T))/Q, \ Q = S e^{-q T}$, which is a scaled covered put, the problem in \eqref{Dupire} can be re-written as follows
\begin{align} \label{Dupire2}
B_T &- \frac{1}{2}v B_{XX} + \dfrac{1}{8}v B = 0,  \\
B(X,0) &= \frac{K - (K-S)^+}{S} e^{-X/2} = e^{- X/2} {\mathbf 1}_{X > 0} + e^{X/2} {\mathbf 1}_{X \le 0}, \nonumber \\
B(X,T)_{X \uparrow -\infty} &= 0, \quad B(X,T)_{X \uparrow \infty} = 0,
\quad (X,T) \in (-\infty,\infty) \times [0,\infty). \nonumber
\end{align}
A similar transformation is used in \cite{Lipton2002} in order to solve the backward Black-Scholes equation. Suppose that there are option price quotes (at least for one strike) for $M$ different maturities $T_1,\ldots,T_M$. Also suppose that for each $T_j$ the market quotes are provided at $X_i, \ i=1,\ldots,n_j$. Then the corresponding continuous piecewise linear local variance function $v_j(X)$ on the interval $[X_{i},X_{i+1}]$ reads
\begin{equation} \label{vDef}
v_{j,i}(X) = v^0_{j,i} + v^1_{j,i} X.
\end{equation}
Subindex $i=0$ in $v^0_{j,0}, v^1_{j,0}$ corresponds to the interval $(-\infty, X_1]$.
Since $v_j(X)$ is continuous, we have
\begin{equation} \label{cont}
v^0_{j,i} + v^1_{j,i} X_{i+1} = v^0_{j,i+1} + v^1_{j,i+1} X_{i+1}, \quad i=0,\ldots,n_j-1.
\end{equation}
The first derivative of $v_j(X)$ experiences a jump at the points $X_{i}, \irg{i}{1}{n_j}$.

Further, assume that $v(X,T)$ is a piecewise constant function of time, i.e.
$v^0_{j,i}, v^1_{j,i}$ don't depend on $T$ on the intervals $[T_j, T_{j+1}], \ j \in [0,M-1]$, and jump to new values at the points $T_j, \irg{j}{1}{M}$. In the original independent variables $K,T$ this condition implies that
\begin{equation*}
v(K_i,T) \equiv v_{j,i} = v^0_{j,i} + v^1_{j,i} \left[\log (K_i/S) - (r-q)T\right], \quad T \in [T_j,T_{j+1}],
\end{equation*}
\noindent i.e. that the local variance is a (discontinuous) piecewise linear function of time $T$. In other words, in the original log-variables $(\log K,T)$ the function $v(\log K,T)$ is piecewise linear in both variables, while in the transformed variables $(X,T)$ the function $v(X,T)$ is piecewise linear in $X$ and piecewise constant in $T$. Thus, $X$ can be viewed as an automodel variable~\footnote{This terminology is borrowed from aerodynamics and physics of gases and fluids.}.

With the above assumptions in mind, \eqref{Dupire2} can be solved by induction. One starts with $T_0 = 0$, and on each time interval $[T_{j-1},T_j], \ \irg{j}{1}{m}$ solves the modified problem for $B_j(X,\tau)$
\begin{align} \label{Dupire3}
&B_{j,\tau} - \frac{1}{2} v_j(X) B_{j,XX} + \dfrac{1}{8}v_j(X) B_j = 0, \\
&B_1(X,0) = B(X,0),  \quad B_j(X,0) = B_{j-1}(X,\tau_{j-1}), \ j > 1 \nonumber \\
&B(X,\tau)_{X \uparrow \pm \infty} = 0, \quad (X,\tau) \in (-\infty,\infty) \times [0,\tau_j], \nonumber
\end{align}
\noindent where $\tau = T - T_{j-1}, \tau_j \equiv T_j - T_{j-1}$, and $B_j$ is the solution of \eqref{Dupire2} corresponding to the time interval $T_{j-1} \le T \le T_j, \irg{j}{1}{m} $.

To solve \eqref{Dupire3}, similarly to LS2011, we use the Carson-Laplace transform $\B = \mathcal{L}(p)\{B\}$ of \eqref{Dupire3} (for application of the Laplace transform to derivatives pricing, see \cite{Lipton2001}) to obtain
\begin{align} \label{Laplace}
- \frac{1}{2} v_j(X) \B_{j,XX} + \left(\dfrac{v_j(X)}{8} + p\right)\B_j &= p B_{j-1}(X,\tau_{j-1}), \\
\B(X,p)_{X \uparrow \pm \infty} &= 0. \nonumber
\end{align}
Since $v(X)$ is a piecewise linear function, the solution of \eqref{Laplace} can also be constructed separately for each interval $[X_{i-1},X_i]$. By taking into account the explicit representation of $v(X)$ in \eqref{vDef}, from \eqref{Laplace} for the $i$-th spatial interval we obtain
\begin{align} \label{Laplace2}
(b_2 &+ a_2 X) \B_{j,XX} + (b_0 + a_0 X)\B_j = p B_{j-1}(X,\tau_{j-1}), \\
b_2 &= - v^0_{j,i}/2, \ a_2 = -v^1_{j,i}/2, \ b_0 = p + v^0_{j,i}/8, \ a_0 = v^1_{j,i}/8. \nonumber
\end{align}
\eqref{Laplace2} is an {\it inhomogeneous} Laplace equation, \cite{PolyaninSaitsevODE2003}. It is well known that if $y_1=y_1(X)$, $y_2=y_2(X)$ are two fundamental solutions of the corresponding {\it homogeneous} equation, then the general solution of \eqref{Laplace2} can be represented as
\begin{align} \label{solInhom}
\B &= C_1 y_1 + C_2 y_2 + p I_{12} \\
I_{12} &= y_2 \int \dfrac{ y_1 B_{j-1}(X,\tau_{j-1})}{(b_2 + a_2 X)W}d X - y_1 \int  \dfrac{ y_2 B_{j-1}(X,\tau_{j-1})}{(b_2 + a_2 X)W}d X, \nonumber
\end{align}
\noindent where $W = y_1 (y_2)_X - y_2 (y_1)_X$ is the so-called Wronskian corresponding to the chosen solutions $y_1,y_2$. Thus, the problem is reduced to finding suitable fundamental solutions of the homogeneous Laplace equations. Based on \cite{PolyaninSaitsevODE2003}, if $a_2 \ne 0$ and $a_0 \ne 0$, the general solution reads
\begin{align} \label{homog}
\B_j &= e^{k X} \mathcal{J}\left(a,0,2 k (\mu - X)\right), \\
k &= \sqrt{-a_0/a_2} = \pm \dfrac{1}{2}, \quad \mu = -\dfrac{b_2}{a_2} = - \dfrac{v^0_{j,i}}{v^1_{j,i}}, \quad a = \dfrac{b_2 k^2 + b_0}{2 a_2 k}. \nonumber
\end{align}
Here $\mathcal{J}(a,b,z)$ is an arbitrary solution of the degenerate hypergeometric equation, i.e., Kummer''s function, \cite{as64}. Two types of Kummer''s functions are known, namely $M(a,b,z)$ and $U(a,b,z)$, which are Kummer''s functions of the first and second
kind~\footnote{Due to the linearity of the degenerate hypergeometric equation any linear combination of Kummer''s functions also solves this equation.}.

\subsection{Numerically satisfactory solutions \label{constr1}}

On each interval of interest, we need to use a fundamental pair that is numerically satisfactory, \cite{Olver1997}.  Since our boundary conditions are set at positive and negative infinity, we need a numerically satisfactory solution for the whole real line. However, it is well known that a single pair of Kummer's functions cannot be numerically satisfactory throughout the whole real line. To overcome this problem, a combined solution can be constructed; below we describe our construction in some detail.

As a preliminary notice, observe that based on the definitions in \eqref{homog}, \eqref{Laplace2} the variable $z$ can be re-written as $z = -2 k v_{j,i}/v^{1}_{j,i}$. Based on the usual shape of the local variance curve and its positivity, see, e.g., \cite{ItkinSigmoid2015} and references therein, for $X \to -\infty$, we expect that $v^1_{j,i} < 0$.  Similarly, for $X \to \infty$ we expect that $v^1_{j,i} > 0$. In between these two infinite limits the local variance curve for a given maturity $T_j$ is assumed to be continuous, but the slope of the curve could be both positive and negative. Also $v_{j,i} \ge 0 \ \  \forall X \in \mathbb{R}$, and $a = -p/(k v^1_{j,i})$. With these observations in mind, we now present our methodology~\footnote{The case where the local volatility is flat on some interval, i.e. $a_2 = 0$, and $a \to \infty$, is considered in Section~\ref{small}.}.

\subsubsection{$v^1_{j,i} < 0$ \label{vNeg}}

For every interval where $v^1_{j,i} < 0$, i.e. $\forall \irg{i}{1}{n_j}$ such that $X \in [X_{i-1}, X_i], \ X_0 = -\infty, \ X_i \le X_{m_j}$, as the first independent solution of the Kummer equation we take $Y_1(z) = z U(a+1,2,z)$ with $k=1/2$, which means
$a > 0$~\footnote{Since in our case $b=0$, by Kummer's transformation, \cite{Olver1997, as64}, $U(a,0,z) = z U(a+1,2,z)$.}.  From the definition of $z$ it follows that $X = \mu - z/(2k) = \mu - z$. Thus, when $X \to -\infty$ we have $z \to \infty$ and $e^{k X} Y_1(z) \to 0$.

This solution is numerically stable across the whole interval $X < X_{m_j}$ except the point $z=0$, which corresponds to $X = \mu$, or, equivalently, $v_{j,i} = 0$; this point belongs to our interval if $\mu < X_{m_j}$~\footnote{As the local variance is linear and non-negative, either this point is at the edge of the interval, or the local variance is flat and vanishes on this interval.}.
At $z=0$ the solution has a branch point, \cite{Olver1997}. The principal branch of $U(a,b,z)$ corresponds to the principal value of $z^{-a}$ and has a cut in the $z$-plane along the interval $(-\infty,0]$.
However, one can observe, that at $z=0$ \eqref{Laplace2} becomes a degenerated ODE, and its solution immediately reads
\begin{equation} \label{degSol}
\B_j = p \dfrac{B_{j-1}(X,\tau_{j-1})}{b_0 + a_0 X} = B_{j-1}(X,\tau_{j-1}).
\end{equation}
Therefore, we can exclude this special case from the below consideration, while if this case were to occur during the actual calibration, we just use the special solution given by \eqref{degSol} instead of the general solution.

As the second independent solution of Kummer's equation for $v^1_{j,i} < 0$ (or $X < X_{m_j}$) we have two choices:
$Y_2(z) = z e^z U(1-a,2,-z)$ or $Y_2(z) = z M(a+1,2,z)$.
It can be shown that if we take the former with $k=-1/2$ (so $a < 0$ and $X = \mu + z$), then two solutions $e^{-X/2} Y_2(X)$ and $e^{X/2} Y_1(X)$ differ just by a constant $e^{\mu}$, so that they are not independent. Therefore, we are compelled to keep $k=1/2$, $a > 0$, and $X = \mu - z$. However, then the function $z e^{z + k X} U(1-a,2,-z)$ diverges at both $X \to -\infty$ and at $z \to 0$.
Similarly, the function $z M(a+1,2,z)$ also diverges at $X \to -\infty$, but is numerically stable at $z \to 0$.

Thus, we have to put $C_2 = 0$ in \eqref{solInhom} on the very first interval $(-\infty,X_1]$ to preserve the boundary conditions at $X \to -\infty$. However, the solution $Y_2(z)$ still contributes to $I_{12}$. In what follows we will use $Y_2(z) = z M(1+a,2,z)$, and show that $I_{12}$ converges in the limit $X \to -\infty$.

For future reference, note that the solutions $y_1(z) = e^{k X} z M(a + 1, 2, z)$ and $y_2 = e^{k X} z U(a + 1, 2, z)$ can also be re-written in terms of Whittaker's functions $M_{p,s}(z), W_{p,s}(z)$, \cite{as64}
\begin{equation*}
y_1(z) = e^{k\mu} M_{-a,1/2}(z), \qquad y_2(z) = e^{k\mu} W_{-a,1/2}(z).
\end{equation*}

\subsubsection{$v^1_{j,i} > 0$ \label{vPos}}

For every interval where $v^1_{j,i} > 0$, i.e. $\forall \irg{i}{2}{n_j+1}$ such that $X \in [X_{i-1}, X_i], \ X_{n_j+1} = \infty, \ X_i > X_{m_j}$, as the first independent solution of Kummer''s equation we again take $Y_1(z) = z U(a+1, 2,z)$ with $k=-1/2$, which means $a > 0$, and $X = \mu + z$. Thus, when $X \to \infty$ we have $z \to \infty$ and $e^{k X} Y_1(z) \to 0$.

Again, this solution is numerically stable on the whole interval $X > X_{m_j}$ except for a singularity at $z=0$ (if $\mu > X_{m_j}$). However, the solution at $z=0$ of \eqref{Laplace2} was already given in the previous subsection.

As far as the second numerically stable solution is concerned, the analysis of the previous subsection is applicable here as well. Therefore, we also take $Y_2(z) = z M(1+a,2,z)$ again with $k=-1/2$, so $a > 0$, and $X = \mu + z$. Accordingly, in \eqref{solInhom} we put $C_2 = 0$
on the very last interval $[X_{n_j},\infty)$ to preserve the boundary conditions at $X \to \infty$.

\subsection{The combined solution across the whole real line \label{wholeSol}}

The solutions described in the previous section are schematically represented in
Table~\ref{TabWhole}.
\begin{table}[!htb]
\begin{center}
\footnotesize
\begin{tabular}{|c|c|c|c|c|c|c|c|}
\hline
Interval & $v^1_{j,i}$ & $k$ & $z$ & $y_1$ & $y_2$ & $C_2$ \\
\hline
$(-\infty, X_1]$ & $< 0$ & 1/2 & $\mu - X$ &   $e^{X/2} z U(a+1,2,z)$ &
$e^{X/2} z M(a+1,2,z)$ & 0 \\
\hline
$[X_i, X_{i+1}]$ & $< 0$ & 1/2 & $\mu - X$ &   $e^{X/2} z U(a+1,2,z)$ &
$e^{X/2} z M(a+1,2,z)$ & \mbox{fc}  \\
\hline
$[X_i, X_{i+1}]$ & $> 0$ & -1/2 & $ X-\mu$ &   $e^{-X/2} z U(a+1,2,z)$ &
$e^{-X/2} z M(a+1,2,z)$ & \mbox{fc}  \\
\hline
$[X_{m_j}, \infty)$ & $> 0$ & -1/2 & $ X-\mu$ &   $e^{-X/2} z U(a+1,2,z)$ &
$e^{-X/2} z M(a+1,2,z)$ & 0  \\
\hline
\end{tabular}
\caption{Our construction of numerically satisfactory Kummer''s pairs. Here "fc" means {\it from continuity}.}
\label{TabWhole}
\end{center}
\end{table}
Accordingly, for $j > 1$ the solution in \eqref{solInhom} on the interval $i$ reads
\begin{align} \label{tot1}
\B_i &= C^{(1)}_{1,i} y_{1,i}(z) + C^{(1)}_{2,i} y_{2,i}(z) + p I^{(1)}_{12,i}(X), \\
W &\equiv W_{1,i} =  e^{X} z^2 W[U(1+a_i,2,z), M(1+a_i,2,z)] = -\dfrac{e^{\mu_i} }{\Gamma(a_i+1)}, \nonumber
\end{align}
\noindent where $\Gamma(x)$ is the gamma function, $I_{12,i}$ is $I_{12}$ defined in \eqref{solInhom} and computed on the interval $i$, and the superscript $^{(s)}$ in $I_{12,i}^{(s)}$ means that $y_{1,i}(X), y_{2,i}(X)$ (the solutions of the homogeneous equation) in the definition of $I_{12}$ in \eqref{solInhom} are taken on the corresponding area $(s)$~\footnote{We use the notation $C^{(l)}_{1,i}, C^{(l)}_{2,i}$ for the integration constants, where super index $\irg{l}{1}{2}$ marks the corresponding area in Fig.~\ref{Fig3}, and the sub index $i$ marks the interval in the $X$ space.}.

For $j=1$ the term $B_{j-1}(X,\tau_{j-1})$ should be replaced with $e^{X/2}$ if $X \in [X_{i-1},X_i], X  \le 0$, and with $e^{-X/2}$ if $X \in [X_{i-1},X_i], X > 0$ in the definition of $I_{12,i}$.

Also in order to satisfy the boundary conditions, see \eqref{Laplace}, $I_{12}$ in \eqref{solInhom} should vanish when $X \to \pm \infty$. We prove this statement in Appendix~\ref{app1}.

Using these results, we can now proceed to constructing the solution of \eqref{Laplace} on the whole real line $X \in [-\infty,\infty]$ by matching solutions on all the intervals.

Suppose that put prices for $T=T_j$ are known for $n_j$ ordered strikes.
\begin{figure}[ht]
\begin{center}
\begin{tikzpicture}[line/.style={<->},thick, framed]


  \draw[->] (-3.0,0) -- (5,0) node[right] {$X$};
  \draw[->] (0,-0.2) -- (0,3.5) node[right] {$v(X)$};

%

\draw[red,ultra thick] (0.5,1) parabola (4,3);
\draw[red,ultra thick] (0.5,1) parabola (-3,2);
\node at (0,-0.3) {$0$};

\node at (-2,-0.3) {$X_1$};
\node at (-2,1.52) {$\bullet$};
\draw[red, dashed] (-2,0) -- (-2,1.52);
\node at (-2.5,0.5) {$A_1$};

\draw[blue, dashed] (-2,1.52) -- (-3,1.8);

\node at (-1,-0.3) {$X_2$};
\node at (-1,1.2) {$\bullet$};
\draw[red, dashed] (-1,0) -- (-1,1.2);

\node at (-1.5,0.5) {$A_{12}$};
\draw[blue, dashed] (-2,1.52) -- (-1,1.2);

\node at (0.7,-0.3) {$X_{m_j}$};
\draw[red, dotted] (0.5,0) -- (0.5,1);

\node at (1.5,-0.3) {$X_3$};
\node at (1.5,1.15) {$\bullet$};
\draw[red, dashed] (1.5,0) -- (1.5,1.15);

\node at (1,0.5) {$A_{23}$};
\draw[blue, dashed] (-1,1.2) -- (1.5,1.15);

\node at (2.5,-0.3) {$\ldots$};

\node at (3.5,-0.3) {$X_{n_j}$};
\node at (3.5,2.45) {$\bullet$};
\draw[red, dashed] (3.5,0) -- (3.5,2.5);

\draw[blue, dashed] (1.5,1.15) -- (3.5,2.5);
\draw[blue, dashed] (3.5,2.5) -- (4,3.5);
\node at (4,0.5) {$A_{n_j}$};

\node at (2.5,2.2) {$2$};
\node at (2.5,1.3) {$1$};

\end{tikzpicture}
\end{center}
\caption{Construction of the whole solution of the Dupire equation. 1 (red solid line - the real (unknown) local variance curve, 2 (dashed blue line) - a piece-wise linear solution.}
\label{Fig3}
\end{figure}
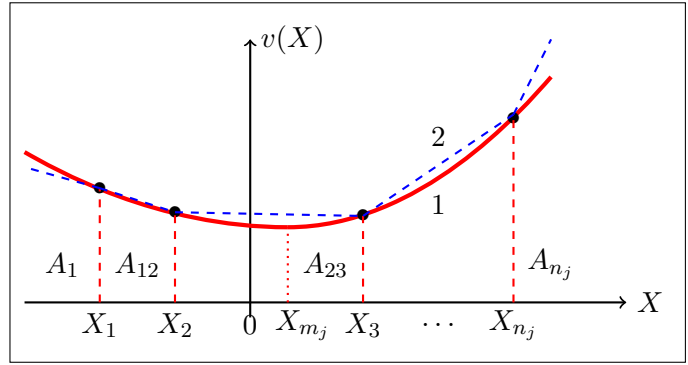

Also, first suppose that these quotes are available for both $K > F$ and $K < F$.  The location of these strikes on the $X$ line is schematically depicted in Fig.~\ref{Fig3}.

According to the analysis of Section~\ref{DupireSec}, on the open interval $A_1$ the solution of \eqref{Laplace} is given by the first line of Table~\ref{TabWhole}. It contains one unknown constant $C^{(1)}_{1,1}$ since we put $C^{(1)}_{2,i} = 0$ due to the boundary conditions. The solutions from line 2 in Table~\ref{TabWhole} should be used for all other intervals $A_{k-1,k}$ such that $k \ge 1, \ v^1_{j,k} \le 0$. These solutions have two yet unknown constants $C^{(1)}_{1,k}, C^{(2)}_{1,k}$, since $X$ is finite on the corresponding interval, and therefore, both solutions $y_1(X), y_2(X)$ are well-behaved. For $X_k$, where $v^1_{j,k} > 0$ and $k \le m_j$ we use the solution given in the third line of Table~\ref{TabWhole}, which also has two yet unknown constants $C^{(2)}_{1,k}, C^{(2)}_{2,k}$ for each interval. Finally, for the interval $X \in [X_{n_j}, \infty) )$, we use the solution in the last line of Table~\ref{TabWhole}.  Again, to obey the boundary conditions we must set $C^{(2)}_{2,n_j+1} = 0$.

Thus, we have $2 n_j$ unknown constants to be determined. Since the local volatility function $v_i$ is continuous at the points $X_i, \ i=1,\ldots,n_j$, so should be the solution $\B(X,\tau_j)$. Therefore, we require that at the points $z_i, \ i = 1,...,n_j$ the solution and its first derivative in $X$ should be a continuous function of $X$. Thus, the above constants solve a system of $2 n_j$ algebraic equations. This system has a special structure that allows one to reduce its LHS matrix to the upper triangular form (actually even the upper banded form). Therefore, it can be efficiently solved with the linear complexity $O(2 n_j)$. For more details, see LS2011.

When computing the first derivatives, we take into account that
\[ h_{i,X} = y_{1,X} I_1(g_i(X))  - y_{2,X} I_2(g_i(X)), \quad i=1,2, \]
\noindent and according to \cite{as64}
\begin{align*}
\fp{M(a,b,z)}{z} &= \dfrac{a}{b}M(a+1,b+1,z), \quad
\fp{U(a,b,z)}{z} = -a U(a+1,b+1,z).
\end{align*}

Also, in some special cases which are discussed in the following sections, the solution can be represented in terms of the modified Bessel functions. But it is known, \cite{as64}, that the derivatives of the modified Bessel functions are expressed in closed form via the same set of functions. Therefore, computing the derivatives of the solution doesn't cause any new technical problems.

Note, that in the definition of the integrals $I_{12}$ in \eqref{solInhom}, for the sake of convenience, we define the low limit of integration $\xi(X)$ as follows. For the interval $A_1$ we take $\xi = -\infty$. Then for each integral $I_{12}(X_i), \ i=2,\ldots,n_j$ we use $\xi_i = X_{i-1}$ (or $z_{i-1}$ if the integral is expressed in $z$ variables, see Appendix). This choice is inspired by the fact that all the parameters $a_i, a_{2,i}, b_{2,i}, \mu_i$ in \eqref{solInhom} are constant on the interval $[X_{i-1}, X_i]$.

Also note, that for the sake of simplicity, in the above construction we assume that market quotes are available for a set of strikes with $K < F$, as well as for a set of strikes with $K > F$. However, it could happen that the market provides just a set of strikes such that all $X_i > 0$ or $X_i < 0$. In this case we can construct the whole solution as follows.

Suppose $X_i < X_{m_j}, \ \forall \irg{i}{1}{n_j}$. Introduce an additional auxiliary point $X_* > X_{m_j}$. Of course, since this is an auxiliary point, the corresponding market quote is not available. Therefore, we don't need to calibrate the local variance at this point. However, introduction of such a point helps to construct the solution on the whole real line, similarly to how it was done above. An unknown constant $C^{(2)}_{1,*}$ again can be found assuming the continuity of the solution at the point $X_*$, while $C^{(2)}_{2,*}$ should be set to 0 to preserve the boundary conditions. Thus, this trick just helps to construct a numerically stable solution across the region $(-\infty, X_1]$ with $X_1 > X_{m_j}$ (when there are no points $X_i < X_{m_j}$), or across the region $[X_{n_j},\infty)$ with $X_{n_j} < X_{m_j}$ (when there are no points $X_i > X_{m_j}$).

According to our construction, the options values as well as option deltas and gammas are continuous in $X$ (and, therefore, in $S$).  Indeed, in the above we required $\B(X,\tau_j), \B_X(X,\tau_j)$ and $v_j$ to be continuous functions of $X$. Then, based on \eqref{Laplace}, $\B_{XX}$ is also a continuous function of $X$. Applying the inverse Laplace transform, we obtain that $B_{XX}$ is also continuous in $X$, and, hence, by the definition of $X$, in $S$. Therefore, by the definition of $B$, $P, \fp{P}{S}, \sop{P}{S}$ are also continuous.
This result demonstrates the additional advantage of our model as compared, e.g., with LS2011, where the options gammas are discontinuous due to only a piece-wise continuity of $v_j$.

\section{Analytical representation of the integrals $I_{12}(X)$ \label{secI12}}

To compute the RHS term $h(X) = p I_{12}(X)$ at some time step $j$ we need a function $B_{j-1}(X,\tau_{j-1})$ obtained at the previous time step. However, market quotes at $T_j$ and $T_{j-1}$ could be given at different sets of $X$ even if the strikes are same since, by definition, $X = \log (K/F(T))$. Therefore, when computing $p I_{12}(X)$ in \eqref{solInhom} by using a numerical quadrature, we need to know the values of $B_{j-1}(X,\tau_{j-1})$ at points $X$ where they have not been calculated yet. There are at least two possible approaches to addressing this issue.

The first approach relies on the fact that the solution $\B_{j-1}$ is already known for each space interval $[X_{i-1},X_i]$. Therefore, to compute $B_{j-1}(X)$, $\ X_{i-1} < X < X_i$ we can use the inverse Laplace transform method as described below. Also this would require computation of $I_{12}(X,\tau_{j-1})$ since this is a part of the solution for $\B_{j-1}$. Thus, this method, despite being exact, is very computationally expensive as it requires the inverse Laplace transform and numerical integration embedded into another inverse Laplace transform and numerical integration.

The second approach, which is advocated in this paper, uses interpolation to compute $B_{j-1}(X)$ given the values of $B_{j-1}(\bar X)$, where $\bar X = X(T_{j-1})$, and $X = X(T_j)$. In general, linear interpolation would be sufficient, however, it gives rise to the violation of the no-arbitrage conditions.

Indeed, according to \cite{CoxRubinstein1985}~\footnote{In \cite{CoxRubinstein1985} these conditions are given for call option prices. In that case the first and the third conditions remain the same as in \eqref{noarb} if we replace $P(K)$ with $C(K)$, while the second condition changes to $C(K_2) > C(K_3).$}, given three put option prices $P(K_1), P(K_2), P(K_3)$ for three strikes $K_1 < K_2 < K_3$, the necessary and sufficient conditions for an arbitrage-free system read
\begin{align} \label{noarb}
P(K_3) &> 0, \qquad P(K_2) < P(K_3), \\
(K_3 - K_2)P(K_1) &- (K_3 - K_1)P(K_2) + (K_2 - K_1)P(K_3) > 0. \nonumber
\end{align}
Suppose that we want to use linear interpolation in the strike space on the interval $[K_1,K_3]$ to find the unknown put option price $P(K_2)$ given the values of $P(K_1), P(K_3)$,
\begin{equation*}
P(K_2) \equiv P_l(K_2) = \dfrac{P(K_1) K_3 - P(K_3) K_1}{K_3 - K_1} + \dfrac{P(K_3) - P(K_1)}{K_3 - K_1} K_2.
\end{equation*}
When plugging this expression into the second line of \eqref{noarb}, the LHS of the latter vanishes, so the third no-arbitrage condition is violated.

This problem, however, could be resolved if we use linear interpolation with a modified independent variable,
\begin{align} \label{lin2}
P(K_2) &\equiv P_F(K_2) \\
&= \dfrac{P(K_1) f(K_3) - P(K_3) f(K_1)}{f(K_3) - f(K_1)} + \dfrac{P(K_3) - P(K_1)}{f(K_3) - f(K_1)} f(K_2), \nonumber
\end{align}
\noindent where $f(K)$ is a convex and increasing function in $[K_1,K_3]$. Indeed, if $f(K)$ is convex, then $P(K_2) = P_F(K_2) = P_l(K_2) - \varepsilon, \ \varepsilon > 0$ (see Fig.~\ref{Fig2}). Substitution of this expression into the second line of \eqref{noarb} gives $(K_3 - K_1) \varepsilon > 0$, which is true. The second condition in \eqref{noarb} now reads
\[ (P(K_1) - P(K_3))(f(K_3) - f(K_2))(f(K_1) - f(K_3)) > 0, \]
\noindent which is also true since $f(K)$ is an increasing function of $K$.

Alternatively, one can use non-linear interpolation. In this paper, for the sake of tractability, we combine both approaches and propose the following interpolation scheme
\begin{align} \label{linNew}
P(K_2) &\equiv P_F(K_2) = \gamma_1 + \gamma_2 K_2 \log K_2, \\
\gamma_1 &= \dfrac{P_3 K_1 \log K_1  - P_1 K_3 \log K_3}{K_1 \log K_1 - K_3 \log K_3}, \nonumber \\
\gamma_2 &= \dfrac{P_1 - P_3}{K_1 \log K_1 - K_3 \log K_3}. \nonumber
\end{align}

\begin{proposition} \label{prop1}
The interpolation scheme in \eqref{linNew} is no-arbitragable.
\end{proposition}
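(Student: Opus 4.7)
The plan is to recognize the scheme in \eqref{linNew} as the special case of the convex-variable linear interpolation \eqref{lin2} with $f(K) = K\log K$, and then reuse the generic argument the paper has already sketched in the discussion just above Proposition~\ref{prop1}. First I would check the identification: any ansatz of the form $P(K) = \gamma_1 + \gamma_2 f(K)$ matching $P_1$ at $K_1$ and $P_3$ at $K_3$ forces
\begin{equation*}
\gamma_2 = \frac{P_1 - P_3}{f(K_1) - f(K_3)}, \qquad \gamma_1 = \frac{P_3 f(K_1) - P_1 f(K_3)}{f(K_1) - f(K_3)},
\end{equation*}
and plugging $f(K) = K\log K$ reproduces exactly the coefficients of \eqref{linNew}. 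Hence \eqref{linNew} is literally \eqref{lin2} for this choice of $f$.

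Next I would verify the two structural hypotheses on $f$ needed by the generic argument. A direct computation gives $f'(K) = 1 + \log K$ and $f''(K) = 1/K$, so $f$ is strictly convex on $(0,\infty)$ and strictly increasing on $(1/e, \infty)$. Under the (practically universal) assumption that the strikes of interest satisfy $K > 1/e$, both hypotheses hold on $[K_1, K_3]$. I would then repeat the three-line verification of \eqref{noarb} already sketched in the paper: positivity $P(K_3) > 0$ is a property of the input data; monotonicity of $f$ together with $P_1 < P_3$ (itself enforced by \eqref{noarb} at the two given strikes) yields $P_F(K_2) < P_3$ through the identity $P_3 - P_F(K_2) = (P_3-P_1)(f(K_3)-f(K_2))/(f(K_3)-f(K_1)) > 0$; and strict convexity of $f$ produces the strict gap $P_F(K_2) = P_l(K_2) - \varepsilon$ with $\varepsilon > 0$, which collapses the third inequality of \eqref{noarb} to $(K_3 - K_1)\varepsilon > 0$.

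The only genuine obstacle is the monotonicity caveat $K > 1/e$; strict convexity comes for free, but $f(K) = K\log K$ is decreasing on $(0,1/e)$ so the argument would break at microscopic strikes. Since $1/e \approx 0.368$ is several orders of magnitude below any realistic equity, FX, or index strike expressed in ordinary currency units, I would simply record this as a standing assumption on the strike range rather than attempt to weaken it. With that assumption in place the verification is essentially a one-line reduction to the already-proved generic criterion, so the proposition follows without further work.
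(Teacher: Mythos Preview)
Your argument is correct, but it follows a different route from the paper's own proof. The paper does not invoke the generic convex-$f$ framework of \eqref{lin2} in the proof at all; instead it remarks that the discrete no-arbitrage conditions in \eqref{noarb} are discretizations of the continuous conditions $P>0$, $P_K>0$, $P_{KK}>0$, and then verifies these by differentiating $P_F(K)=\gamma_1+\gamma_2 K\log K$ directly: $P_{F,K}=\gamma_2(1+\log K)$ and $P_{F,KK}=\gamma_2/K$, so one only needs $\gamma_2>0$ (which follows from $P_1<P_3$ together with $K_1\log K_1<K_3\log K_3$) and $K>1/e$. Your reduction to the already-sketched convex-variable argument is a bit longer but has the merit of checking the discrete inequalities of \eqref{noarb} directly, whereas the paper leaves the passage from continuous convexity to the discrete three-point inequality implicit. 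Both approaches end up requiring exactly the same monotonicity caveat: you phrase it as $K>1/e$, the paper as ``provided that $P$ is an increasing function of $K$''.
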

\begin{proof}
Observe, that the no-arbitrage conditions in \eqref{noarb} are discrete versions of the conditions
\[ P > 0, \quad P_K > 0, \quad P_{K,K} > 0. \]
By differentiating the first line of \eqref{linNew} one can check that the proposed interpolation obeys these conditions provided that $P$ is an increasing function of $K$ given the values of all other parameters to be constant. \bs
\end{proof}

\begin{figure}[!htb]
\begin{center}
\fbox{\includegraphics[width=0.98\textwidth]{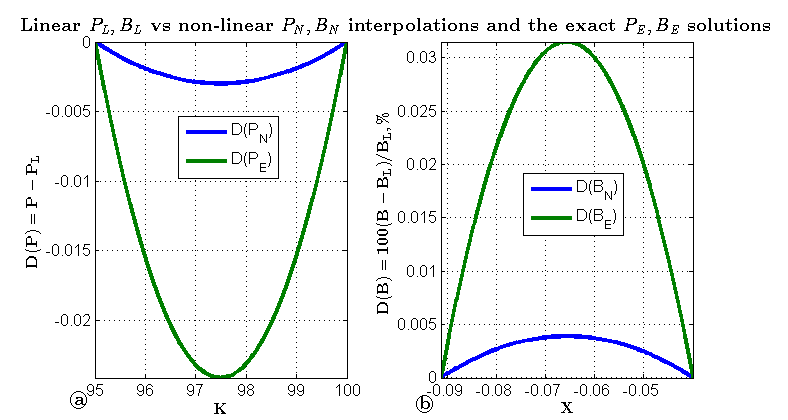}}
\caption{$(a)$: Absolute differences $D(P) = P - P_L$ for no-arbitrage non-linear interpolation $P_N$, and  the exact Black-Scholes put prices $P_E$, with the linear interpolation $P_L$. The line $D(P_L) = 0$ corresponds to $P_L$. \hspace{0.1cm}
$(b)$: Same for the relative differences of no-arbitrage non-linear $B_N$ and linear $B_L$ interpolations with the exact Black-Scholes put prices $B_E$.}
\label{Fig2}
\end{center}
\end{figure}

For the sake of illustration, in Fig.~\ref{Fig2}a we present a comparison of the no-arbitrage interpolation $P_N$ with its linear counterpart $P_L$ and the exact price $P_E$ computed for the Black-Scholes model (for emphasis, the differences $D(P_N) = P_N-P_L, D(P_E) = P_E-P_L$ are displayed). The plot is computed using the following values: $S = 100, K_1 = 95, K_3 = 100, r = 0.05, q = 0.01, \sigma = 0.5, T = 1$. It is clear that the no-arbitrage conditions are satisfied.

Using the definition of $X$ and $B(X,T)$ and some algebra, the interpolation formula in \eqref{lin2} for $B(X)$ can be re-written as
\begin{align} \label{int2}
&B[X,\tau] = \amt e^{-X/2}  + (\bpo X + \bpt) e^{X/2}, \\
\amt &= \dfrac{e^{\frac{X_1+X_3}{2}}}{R} \Big[e^{\frac{X_1+X_3}{2}}(k_3 - k_1) + e^{\frac{X_1}{2}} k_1 B(X_3,T) - e^{\frac{X_3}{2}} k_3 B(X_1,T)\Big], \nonumber \\
\bpo &= \dfrac{1}{R} \left[
e^{X_3} - e^{X_1} - e^{\frac{X_3}{2}}  B(X_3,T) + e^{\frac{X_1}{2}}  B(X_1,T) \right], \nonumber \\
\bpt &= \dfrac{1}{R} \left[ \left(e^{\frac{X_1}{2}} B(X_1,T)- e^{\frac{X_3}{2}} B(X_3,T) \right)\log F + e^{X_1} X_1 - e^{X_3} X_3 \right], \nonumber \\
R &= k_1 e^{X_1} - k_3 e^{X_3}. \nonumber
\end{align}
\noindent where $k_i = X_i + \log F$. In Fig.~\ref{Fig2}b the relative difference for linear $B_L$ and non-linear $B_N$ interpolations vs. the exact Black-Scholes value $B_E$ is shown as a function of $X$. It can be seen that in this test the difference is around 3 bps.

Now the expression given by \eqref{int2} can be substituted into the definition of $I_{12}(X)$ in \eqref{solInhom}. It turns out that then the corresponding integral can be computed in closed form, see Appendix~\ref{app2}.

Sometimes, it could happen that for the new maturity deep out-of-the-money (OTM) or in-the-money (ITM) strikes are positioned outside of the region covered by strikes given for the previous maturity. That means, that no-arbitrage interpolation cannot be used in such a case, while using \eqref{int2} for extrapolation will lead to arbitrage. This issue can be addressed as follows.

Suppose that at $T_j$ the last strike with a known market quote is $K_{j,n_j}$. Suppose that at $T_{j+1} > T_j$ we are given a set of new strikes $K_{j+1,1},...,K_{j+1,n_{j+1}}$, such that $X_{j+1,l} > X_{j,n_j}, \ \forall l: n_{j+1} \le  l \le i$, where $i$ is some integer
$\irg{i}{1}{n_{j+1}}$. Now introduce an auxiliary point $X_{j,*}$, such that $X_{j,*} > X_{j+1,n_j+1}$ and $X_{j,*} < \infty$. Based on the boundary conditions we can assume that $B(X_*,T_j) = 0$. Then, having this extra auxiliary point, the problem of extrapolation reduces to interpolation which was discussed above. A similar approach can be used at the opposite end when $X_{j+1,l} < X_{j,1}, \ \forall \irg{l}{1}{i}$ for some $i > 0$. Then the auxiliary point $X_{j,*}$ should be inserted on the interval $-\infty < X_{j,*} < X_{j+1,1}$.

\paragraph{\bf Solution for the first term $T_1$} For the first term $T_1$ we don't need interpolation since we know the solution $B(X,0)$ along the whole real line $X \in (-\infty,\infty)$. It is given by the terminal condition in \eqref{Dupire2} and fits into our interpolation formula in \eqref{int2} if we set $\amt = p{\mathbf 1}_{X > 0}$ and $\bpo = 0, \bpt = p{\mathbf 1}_{X \le 0}$. Thus, in this case the analytical solution for $I_{12}(X)$ is still available, see Appendix~\ref{app2}.

\section{Small $|v^1_{j,i}|$ \label{small}}

When calibrating the model to the market data, it could happen that some values of $v^1_{j,i}$ become small, so that $|v^1_{j,i} X_i| \ll 1$~\footnote{The case $v^1_{j,i} = 0$ is considered in LS2011, where the integrals $I_{12}(X)$ were computed numerically.}. In this case, the solutions considered in Section~\ref{DupireSec} are no longer valid. Therefore, we need to consider \eqref{Laplace2} which can be represented in the form
\begin{equation} \label{LapPWconst}
\left(1 + \epsilon \right)\B_{j,XX} + \left(\kappa^2 - \dfrac{\epsilon}{4}\right)
\B_j = \dfrac{p}{b_2} B_{j-1}(X,\tau_{j-1}),
\end{equation}
\noindent where $\kappa = \sqrt{b_0/b_2}$, and for each interval $[X_{i-1},X_i],
\irg{i}{2}{n_j}$ the parameter $\epsilon$ is defined as
\begin{equation*} \label{eps}
\epsilon = v^1_{j,i} X_i/v^0_{j,i}.
\end{equation*}
If $|\epsilon_i| \ll 1$, a general solution of \eqref{LapPWconst} $\B_{j}$ can be represented as a power series in $\epsilon$, i.e.,
\[ \B_{j} = \sum_{s=0}^\infty \B^{(s)}_{j}(X) \epsilon^s. \]

\paragraph{\bf Zeroth-order approximation} In zeroth-order approximation \eqref{LapPWconst} can be written as
\begin{equation*}
\B^{(0)}_{j,XX} + \kappa^2 \B^{(0)}_j = \dfrac{p}{b_2} B_{j-1}(X,\tau_{j-1}),
\end{equation*}
So that the corresponding variance is piece-wise constant. A general solution of this equation has the form
\begin{align} \label{LapPWconst2}
\B^{(0)}_{j} &= C_1 y_1(X) + C_2 y_2(X) + \dfrac{p}{b_2} I_{12}(X), \\
y_1 &= e^{\imath \kappa X}, \quad y_2 = e^{-\imath \kappa X}, \nonumber \\
I_{12} &= y_2 \int \dfrac{y_1 B_{j-1}(X,\tau_{j-1})}{W}d X - y_1 \int  \dfrac{y_2 B_{j-1}(X,\tau_{j-1})}{W} d X. \nonumber
\end{align}
Obviously, for these $y_1, y_2$ (which are always numerically satisfactory), we have $W[y_1,y_2] = - 2 \imath \kappa$. Again, we use the no-arbitrage interpolation of the solution obtained at the previous time step to compute $I_{12}(X)$ explicitly:
\begin{align*}
I_1&(X, \kappa) = \int e^{\imath \kappa X} B_{j-1}(X,\tau_{j-1})\dfrac{d X}{W} \\
&= -\dfrac{1}{2\imath \kappa} \int e^{\imath\kappa X} [\amt e^{-X/2}
+ (\bpo X + \bpt) e^{X/2}] d X \nonumber \\
&= -\dfrac{1}{2 \imath \kappa} \left[\dfrac{\amt}{\delta_-} e^{\delta_- X} +
\dfrac{\bpo(\delta_+ X -1) + \delta_+ \bpt}{\delta^2_+} e^{\delta_+ X} \right], \quad \delta_\pm =  \imath \kappa \pm 1/2, \nonumber\\
I_2&(X) = -\dfrac{1}{2\imath \kappa} \int e^{-\imath \kappa X} [\amt e^{-X/2}
+ (\bpo X + \bpt) e^{X/2}] d X \nonumber \\
&= \dfrac{1}{2 \imath \kappa} \left[\dfrac{\amt}{\delta_+} e^{-\delta_+ X} +
\dfrac{\bpo(\delta_- X + 1) + \delta_- \bpt}{\delta^2_-} e^{-\delta_- X} \right], \nonumber \\
I_{12}&(X) = e^{- \imath\kappa X} I_1(X, \kappa) - e^{\imath \kappa X} I_2(X,\kappa)
=  -\dfrac{1}{\imath \kappa}\left[A_- e^{-X/2} + A_+ e^{X/2}\right], \nonumber \\
&A_- = \amt \left(\dfrac{1}{\delta_+} + \dfrac{1}{\delta_-}\right), \nonumber \\
A_+ &= \dfrac{\bpo (\delta_- X + 1) + \delta_- \bpt}{\delta_-^2} +
\dfrac{\bpo (\delta_+ X - 1) + \delta_+ \bpt}{\delta_+^2}. \nonumber
\end{align*}
These solutions can be considered as a further improvement of LS2011, since i) they embed a no-arbitrage interpolation, and ii) this interpolation allows computation of the source terms in closed form. Obviously, performance-wise such an approach significantly speeds up the calculations.

\paragraph{\bf First-order approximation} In first-order approximation in $\epsilon \ll  1$, \eqref{LapPWconst} has the form
\begin{align*}
X \B^{(1)}_{j,XX} &+ (2 + \kappa^2 X) \B^{(1)}_j = X \mathcal{B}^{(0)}(X), \\
\mathcal{B}^{(0)}(X) &= \left( \kappa^2 + \frac{1}{4}\right) \B^{(0)}_j - \dfrac{p}{b_2} B_{j-1}(X,\tau_{j-1}). \nonumber
\end{align*}
If $|X| \ll 1$, then
\[ \B^{(1)}_j = \dfrac{X}{2}\mathcal{B}^{(0)}(X). \]
Otherwise, the solution to this equation reads, see \cite{PolyaninSaitsevODE2003},
\begin{align} \label{LapPWconst2}
\B^{(1)}_{j} &= C_2 + C_1 y_1(X) + I_{12}(X), \\
y_1 &= -\kappa^2 \mathrm{Ei}(-\kappa^2 X) - \dfrac{e^{-\kappa^2 X}}{X}, \qquad
W = \dfrac{e^{-\kappa^2 X}}{X^2}, \nonumber \\
I_{12} &= \int \dfrac{y_1 X \mathcal{B}^0(X)}{W}d X - y_1 \int  \dfrac{X \mathcal{B}^0(X)}{W}d X, \nonumber
\end{align}
\noindent where $\mathrm{Ei(X)}$ is the exponential integral, \cite{as64}. If $\kappa^2 > 0$ then $C_1$ should be set to zero when $X \to -\infty$, i.e., on the very first interval. If $\kappa^2 < 0$ $C_1$ should be set to zero when $ X \to \infty$, i.e., on the very last interval.

\section{Large values of the parameter $|a|$. \label{largeA}}

In many practical situations the parameter $|a|$ in \eqref{homog} can become large. Indeed, it follows from the analysis of Section~\ref{wholeSol} that $|a_i| = 2 p/|v^1_{j,i}|$. The values of $p$ we are interested in can be estimated by taking into account the fact that for computation of the inverse Laplace transform we use the Gaver-Stehfest algorithm described in Section~\ref{calib}. Then, by virtue of \eqref{GSA}, $p = s (\log 2)/\tau_j$, where $s$ runs from 1 to $N=12$. Therefore, for a typical value of $\tau_j = 0.1$, $p$ changes in the range from 7 to 83. At the same time, usually $|v^1_{j,i}|=O(0.1)$, so that $|a| \gg 1$.

From \cite{as64, Olver1997} we know that for large values of $a$ the value of $U(a+1,2,z)$ is very small, while the value of $U(1-a,2,-z)$ is very big. Therefore, the computation of unknown constants $C^{(2)}_{1,i}, C^{(2)}_{2,i}$ is difficult, because i) it requires high-precision arithmetic, and ii) it is pretty unstable. On the other hand, in this case we have a small parameter $1/|a| \ll 1$ in \eqref{Laplace2}, so we can find an asymptotic solution of \eqref{Laplace2}.

We start with a rigorous definition of the small parameter $\varepsilon \equiv -k v^1_{j,i}/p$. Here, when choosing the sign of $k$, we should not rely on the analysis of Section~\ref{wholeSol}, because we only require convergence of our asymptotic solution when $\varepsilon \to 0$. Below we assume that $|\varepsilon| \ll 1$~\footnote{In what follows, for simplicity we omit the modulo, i.e. by saying that $\varepsilon$ is small we mean that $|\varepsilon| \ll 1$.}. With this definition in mind, and using definitions in \eqref{homog}, we re-write \eqref{Laplace2} in the form
\begin{equation} \label{LaplAsymp}
\epsilon \bar{X} \B_{j,XX} + \left(2 k - \frac{1}{4}\varepsilon \bar{X}\right) \B_j = 2 k B_{j-1}(X,\tau_{j-1}), \qquad \bar{X} = X - \mu,
\end{equation}
\noindent where $|2 k| = 1$. This equation belongs to the class of singularly perturbed differential equations, \cite{Wasow1987}. It can be solved by using either the method of matching asymptotic expansions, \cite{Nayfeh2000}, or the method of boundary functions, \cite{VasBut1995} which we will use below.

The need for a special method is due to the fact that for a regular asymptotic expansion of the unknown function $\B(X,\tau)$ in a series in powers of the small parameter $\varepsilon$, zeroth-order approximation yields $\B^{(0)}(X,\tau_j) = B_{j-1}(X,\tau_{j-1})$. Here the superscript $^{(0)}$ denotes the order of the approximation. Obviously, this solution, which doesn't does not depend on any free parameter, is incorrect in the vicinity of the end points of the interval $[X_{i-1},X_i]$, where the solution and its first derivative have to be continuous functions of $X$. So we don't have any degrees of freedom to satisfy this continuity. That is why \eqref{LaplAsymp} belongs to the class of singularly perturbed differential equations, which cannot be solved by using regular expansions in powers of $\varepsilon$.

Following \cite{VasBut1995}, we represent the solution of \eqref{LaplAsymp} on the interval $[X_{i-1},X_i]$ in the form
\begin{equation} \label{asymSol}
\B(X,\tau_j) = \sum_{s=0}^\infty \varepsilon^s \B^{*,s}(X,\tau_j) + \sum_{s=0}^\infty \varepsilon^s \Pi^{(s)}(x_{i-1},\tau_j) + \sum_{s=0}^\infty \varepsilon^s \Xi^{(s)}(x_{i},\tau_j).
\end{equation}
Here $\B^{*}(X,\tau_j)$ is the solution of the so-called "reduced" equation, while $\Pi(x_{i-1},\tau_j)$ and $\Xi(x_{i},\tau_j)$ are the so-called boundary functions. The boundary functions vanish far away from the boundaries $X_{i-1}, X_i$ when $\varepsilon \to 0$. On the other hand, they are needed to ensure that the solution satisfiees the boundary conditions. For any small fixed $\varepsilon \ll 1, \ \varepsilon \ne 0$ the asymptotic solution is an approximation of the exact solution which can be obtained up to $O(\varepsilon^N)$ with $N$ being an arbitrary positive integer, see \cite{VasBut1995}.

Also in \eqref{asymSol} $x_{i-1} = (X - X_{i-1})/\sqrt{\varepsilon}$ is the stretched distance to the left boundary, and $x_{i} = (X - X_{i})/\sqrt{\varepsilon}$ is the stretched distance to the right boundary.

Based on the method of \cite{VasBut1995}, in zeroth-order approximation the reduced equation, which follows from  \eqref{LaplAsymp} at $\varepsilon \to 0$, has a trivial solution $\B^{,0*}(X,\tau_j) = B_{j-1}(X,\tau_{j-1})$. Then, from \eqref{LaplAsymp} the boundary function $\Pi^{(0)}(x,\tau_j)$  solves the equation
\begin{equation} \label{Pi}
(x - \mu_i) \Pi^{(0)}_{xx}(x,\tau_j) + 2 k \Pi^{(0)}(x, \tau_j) = 0.
\end{equation}
The latter has the following solution, \cite{PolyaninSaitsevODE2003}
\begin{align} \label{PiSol}
\Pi^{(0)}(x,\tau_j) &= C_1 \phi_{i-1}(x) I_1(2 \phi_{i-1}(x)) + C_2\phi_{i-1}(x) K_1(2 \phi_{i-1}(x)), \\
\phi^2_{i-1}(x) &\equiv - 2 k \left(x - \mu_i\right) = - \dfrac{2 k}{\sqrt{\varepsilon}} \left(X - X_{i-1} - \sqrt{\varepsilon}\mu_i\right). \nonumber
\end{align}
Here $C_1, C_2$ are two integration constants, and $I_1(x), K_1(x)$ are the modified Bessel functions of the first and second kind.

We must prove that $\Pi^{(0)}(x,\tau_j) \to 0$ when $\varepsilon \to 0$. Based on \cite{as64}, we know that this is true for $K_1(2 \phi_{i-1}(x))$ if $k < 0$ since $X > X_{i-1}$, but not for $I_1(2 \phi_{i-1}(x))$. Therefore, in \eqref{PiSol} we must put $C_1 = 0$, and $k = -1/2$. Note, that for $v^1_{j,i} < 0$ $\varepsilon \in \mathbb{C}$, but this is not a problem.

Similar arguments show that for $\Xi^{(0)}(x,\tau_j)$ in zeroth-order approximation in $\varepsilon$ the solution reads
\begin{align*}
\Xi^{(0)}(x,\tau_j) &= C_2\phi_{i}(x) K_1(2 \phi_{i}(x)), \\
\phi^2_{i}(x) & = - \dfrac{1}{\sqrt{\varepsilon}} \left(X - X_{i} - \sqrt{\varepsilon}\mu_i\right). \nonumber
\end{align*}
Thus, finally, zeroth-order asymptotic solution of \eqref{LaplAsymp} has the form
\begin{align*}
\B^{(0)}(X,\tau_j) &= B_{j-1}(X,\tau_{j-1}) + C_1\phi_{i-1}(x) K_1(2 \phi_{i-1}(x)) \\
&+ C_2\phi_{i}(x) K_1(2 \phi_{i}(x)). \nonumber
\end{align*}
The unknown constants $C_1, C_2$ can be found using the method described in the next section. The values of $B_{j-1}(X,\tau_{j-1})$ at the points $X_{i-1},X_i$ can be obtained by using our no-arbitrage interpolation described in Section~\ref{secI12}.

The next approximations in $\varepsilon$ can also be constructed based on the general method of \cite{VasBut1995}. The reduced equation now reads
\[ \bar{X} \B^{(*,0)}_{XX}(X,\tau_j) - \dfrac{1}{4}\bar{X}\B^{(*,0)}(X,\tau_j) + 2 k \B^{(*,1)}(X,\tau_j) = 0, \]
\noindent with the obvious solution
\begin{equation*}
\B^{(*,1)}(X,\tau_j) = \dfrac{1}{2 k} \bar{X}\left[ \dfrac{1}{4}\partial_{X,X}\B_{j-1}(X,\tau_{j-1}) - \B_{j-1}(X,\tau_{j-1}) \right].
\end{equation*}
As $\B_{j-1}(X,\tau_{j-1})$ solves \eqref{Laplace}, we can represent $\partial_{X,X}\B_{j-1}(X,\tau_{j-1})$ with $j > 1$ in the form
\begin{equation*}
\partial_{X,X}\B_{j-1}(X,\tau_{j-1}) = -\dfrac{2 p}{v_{j-1,i}(X)} B_{j-2}(X,\tau_{j-2}) -
\left(\dfrac{2 p}{v_{j-1,i}(X)} + \dfrac{1}{4}\right) \B_{j-1}.
\end{equation*}

The equation for $\Pi^{(1)}(x,\tau_j)$ is
\begin{equation*} \label{var2}
(x - \mu_i) \Pi^{(1)}_{xx}(x,\tau_j) + 2 k \Pi^{(1)}(x, \tau_j) = \dfrac{1}{4}(x- \mu_i)\Pi^{(0)}_{xx}(x,\tau_j) = - \dfrac{k}{2}\Pi^{(0)}(x,\tau_j).
\end{equation*}
This equation is similar to \eqref{Pi}, the only difference being that now it is inhomogeneous. Accordingly, its solution reads
\begin{align} \label{PiSol2}
\Pi^{(1)}(x,\tau_j) &= \phi_{i-1}(x) K_1(2 \phi_{i-1}(x)) + I_{i-1}^{(1)}, \\
I_{i-1}^{(1)} &= -\dfrac{k}{2}\Bigg\{
\phi_{i-1}(x) I_1(2 \phi_{i-1}(x)) \int \dfrac{K_1(2\phi_{i-1}(x))}{\phi_{i-1}(x)}\Pi^{(0)}(x,\tau_j) d x \nonumber \\
&- \phi_{i-1}(x) K_1(2 \phi_{i-1}(x)) \int \dfrac{I_1(2\phi_{i-1}(x))}{\phi_{i-1}(x)}\Pi^{(0)}(x,\tau_j) dx \Bigg\} \nonumber \\
&= -\dfrac{k}{2}\Bigg\{
\phi_{i-1}(x) I_1(2 \phi_{i-1}(x)) \int K^2_1(2\phi_{i-1}(x)) d x \nonumber \\
&- \phi_{i-1}(x) K_1(2 \phi_{i-1}(x)) \int I_1(2\phi_{i-1}(x)) K_1(2\phi_{i-1}(x))dx \Bigg\} \nonumber
\end{align}
From \cite{PrudnikovBrychkov1986} we have
\begin{align*}
\int K^2_1 &(2\phi_{i-1}(x)) d x = \phi_{i-1}(x) K^2_1(2\phi_{i-1}(x)) - K_0(2\phi_{i-1}(x))K_2(2\phi_{i-1}(x)), \nonumber \\
\int I_1 &(2\phi_{i-1}(x)) K_1(2\phi_{i-1}(x)) d x = \int I_1(2\sqrt{x-\mu_i}) K_1(2\sqrt{x-\mu_i}) d x \\
&= 2\int y I_1(2y) K_1(2y) d y = y^2 \Big[ \left(1 + \dfrac{1}{4 y^2}\right) I_1(2y) K_1(2y) \nonumber \\
&- I'_1(2y) K'_1(2y)\Big] - \dfrac{1}{4}, \qquad y \equiv \phi_{i-1}(x), \nonumber \\
I'_1(2y) &= 2 \left[ I_0(2 y) - \dfrac{1}{y}I_1(2y) \right], \qquad
K'_1(2y) = -2 \left[K_0(2 y) + \dfrac{1}{y}K_1(2y)\right]. \nonumber
\end{align*}
We emphasize that in \eqref{PiSol2} we don't need free constants since they already appear in zeroth-order solution. Therefore, the boundary conditions can be satisfied by choosing appropriate values for these constants.

Accordingly, the function $\Xi^{(1)}(x,\tau_j)$ can be found in a similar way. The overall solution is given by the expression \eqref{PiSol2}, where $\phi_{i-1}(x)$ must be replaced with $\phi_{i}(x)$. This finalizes the construction of the first-order approximation.

We will not construct higher order approximations for $\B^{(s)}(X,\tau_j), \ s > 1$ because incorporation of the first two terms already provides a good approximation with the accuracy of $O(\varepsilon^2)$ (since usually $\varepsilon$ is of order 0.1 or less). Also, as we observed in our numerical experiments, using these asymptotic solutions as part of the calibration procedure makes the latter fairly stable.

\section{The calibration procedure} \label{calib}

The calibration procedure runs sequentially for each time step beginning from $j=1$ and up to $j = M$. Given the solution at the previous time step $B_{j-1}(X,\tau_j)$, we proceed by making some initial guess for the parameters $v^0_{j,i}, v^1_{j,i}, \ i=0,\ldots,n_j$~\footnote{If $j=1$ the previous time solution is just the payoff function.}. Actually, we need this guess just for $v^1_{j,i}, \ i=0,\ldots,n_j$ and $v^0_{j,n_j}$, because based on \eqref{cont}
\begin{equation} \label{recurr}
v^0_{j,i} = v^0_{j,n_m} + \sum_{k=i+1}^{n_j} X_{k}(v^1_{j,k} - v^1_{j,k-1}) , \quad i=0,n_j-1.
\end{equation}
So the total number of the unknown parameters to be determined is $n_j + 2$. Since for maturity $T_j$ only $n_j$ market quotes are given, we need two additional conditions to provide a unique solution. For instance, often traders have an intuition about the asymptotic behavior of the volatility surface at infinity, which, according to our construction, is determined by $v^1_{j,n_j}$ and $v^1_{j,0}$.

Using the analytical solution $\B_{j}(X,p)$ for a given maturity, the scaled put option prices $B(X_i,\tau_j)$ can be calculated similarly to LS2011 by computing the inverse Carson-Laplace transform. The latter can be efficiently performed by using the Gaver-Stehfest algorithm
\begin{equation} \label{GSA}
B(X,\tau_j) = \sum_{s=1}^{(N)} \dfrac{St_s^{(N)}}{k} \B(X, s \Lambda), \quad \Lambda = \dfrac{\log 2}{\tau_j}.
\end{equation}
This algorithm was studied in many papers (see, e.g., \cite{Kuznetsov2013} and references therein), and, provided that the resulting function is non-oscillatory, converges very quickly. For instance, choosing $N=12$ is usually sufficient. The coefficients $St_s^{(12)}$ can be found explicitly, see, e.g., LS2011. It is also known that this algorithm requires high-precision arithmetic for its implementation.  This effect is especially pronounced for small $\tau_j$, so the inversion can become numerically unstable unless a sufficient number of significant digits is used.

Once all the option prices are computed, they can be compared with given market quotes. Hence, some kind of a least-square minimization procedure can be utilized to find the final values of all the unknown parameters that fit model option prices to market quotes. Complexity-wise, at every iteration we need to compute the solution at $n_j$ spatial points and $N$ temporal points, the former are given, the latter are prescribed by the Gaver-Stehfest algorithm. Also every such solution, as it is defined in \eqref{tot1}, requires $2 n_j$ constants $C_1,...,C_{2 n_j}$, which solve the corresponding system of linear equations. As it was mentioned earlier, due to its special structure, this system can be solved with complexity of $O(2 n_j)$. Overall, complexity of performing one iteration is $O(2 n_j N)O(\text{Ku})$, where $O(\text{Ku})$ is complexity of computing all Kummer's functions for the solution in one spatial point. This seems to be a significant improvement in performance as compared, e.g., with LS2011, where computation of the source terms required numerical integration.

\subsection{Initial guess for the calibration. \label{initGuess}}

Obviously, the calibration is a time-consuming process, therefore, having a smart initial guess significantly improves its convergence rate.

Suppose we have already obtained all values of the parameters for maturities $T_j, \ j \in [1,j_1], \ j_1 < M$, and now need to run the calibration for the maturity $T_m, \ m = j_1+1$. Also suppose we are given market values $w(m,i), \ i=1,\ldots,n_m$ for the implied variance.  To produce an "educated" initial guess for the calibration procedure, we suggest to use \eqref{lv} to get the initial values of $v^1_{m,i}, \ i=1,\ldots,n_m-1$ and $v^0_{m,n_m}$. In particular, the first derivatives $\dd_T w(m,i), \dd_X w(m,i)$ in the right-hand-side of \eqref{lv} can be approximated by the finite-differences of the first order using two given values of $w$ in the strike and time space, and the second derivative $\dd^2_X w$ - by using the second order approximation with three given values of $w$ in the strike space. When computing $\dd_T w(m,i) \approx [w(m,K_i) - w(m-1,K_i)]/\tau_m$ it is possible that market quote $w(m-1,K_i)$ is not available at $T_{m-1}$; in this case interpolation/extrapolation in $K$ over given quotes at $T_{m-1}$ can be used to get this value. This calculation generate values for $\sigma_{m,i}, \ \irg{i}{1}{n_m}$. If some of them are negative, they can be replaced by a small positive number $\delta$.

Next, we use \eqref{recurr} and obtain a system of linear equations of the form
 \begin{align}
 v^0_{m,n_m} + \sum_{k=i}^{n_m} & X_{k}(v^1_{m,k} - v^1_{m,k-1}) + v^1_{m,i} X_{i} = \sigma_{i,m}, \qquad i \in [1,n_m-1], \nonumber \\
 v^0_{n_m,m} &= \sigma_{n_m,m} - v^1_{n_m,m} X_{n_m}, \nonumber
 \end{align}
\noindent where the values $v^1_{m,n_m}$, $v^1_{m,0}$ are given. Since this system is upper triangular, it could be efficiently solved with linear complexity $O(n_m)$.

\section{Option prices for short $T$ \label{short}}

As was mentioned in the previous section, computation of the inverse Laplace transform by using the Gaver-Stehfest algorithm requires very high-precision arithmetic for small $\tau_j$. Therefore, in this limit it does make sense to solve the modified Dupire's equation in a different way, namely by using an asymptotic expansion for its solution at $\tau_j \to 0$,  see also LS2011.

For the time-homogeneous models of the local volatility, i.e., when the volatility does not depend explicitly on time, this problem was considered in various papers, see, e.g., \cite{Gatheral2012SmallT} and references therein. For the time-inhomogeneous model it was further analyzed in \cite{Gatheral2012SmallT}. In that paper an asymptotic representation for the European call option price $C(\tau_j,x)$ with $x = \log S$ was obtained by using an expansion of the transition density function of a one-dimensional time inhomogeneous diffusion. If $x < \log K$ this asymptotic solution reads
\begin{align} \label{gathT}
C^{-}(\tau_j,x) &= \dfrac{v_j(K) K}{ \sqrt{2\pi} }\dfrac{u_0(x,\log K)}{d^2(x,\log K)}
\tau_j^{3/2} \exp \left[-\frac{d^2(x,\log K)}{2 \tau_j}\right], \nonumber \\
d(x,y) &= \int_x^y \frac{dx}{\sqrt{v_j(x)}}, \\
u_0(x,y) &= \left(\dfrac{v_j(x)}{v^3_j(y)}\right)^{1/4} \exp\left[ - \dfrac{1}{2}(y-x) + (r-q) \int_x^y
\dfrac{d s}{v_j(s)}\right], \nonumber
\end{align}
\noindent where the superscript $^{(-)}$ is used to indicate that this solution corresponds to $x < \log K$~\footnote{In our notation $x = \log K - (r-q)T - X$.}. Also, when deriving \eqref{gathT} it is assumed that $\forall x \in \mathbb{R} \ \ \exists C > 0 : C^{-1} \le \sigma_j(x), |\sigma'_j(x)| \le C, |\sigma''_j(x)| \le C$. This assumption may fail at the boundaries when $S \to 0$ and $S \to \infty$.

Having call prices $C(\tau_j,x_i)$ computed for all strikes $K_i, \ i=1,\ldots,n_j$ and a particular maturity $\tau_j \ll \min_i (1/\sigma_{j,i})$, we can also compute the corresponding put prices by using call-put parity. Then, running parameters for the local variance function can be found by calibration. Note, that since $v_j(x)$ is piece-wise linear in $X$, the integral in \eqref{gathT} can also be constructed as a sum of various contributions. When calculating these contributions, we rely on the fact that if $x,y$ belong to the interval $i$, the variance on this interval is given by \eqref{vDef}, so that
\begin{align*} \label{int}
d(x,y) &= \int_x^y \dfrac{dx}{\sqrt{b - a_2 x}} = \frac{2}{a_2}\left(\sqrt{b_2 + a_2 Y} - \sqrt{b_2 + a_2 X}\right), \\
\int_x^y \dfrac{d s}{v_j(s)} &=
\frac{1}{a_2} \log \frac{b_2 + a_2 Y}{b_2 + a_2 X}, \nonumber
\end{align*}
\noindent where $b = b_2 + a_2 (\log K + \kappa)$.

If $x > \log K$, from \cite{Gatheral2012SmallT} we have
\begin{equation*}
C^{+}(\tau_j,x) = e^x - K e^{-r \tau_j} - C^{-}(\tau_j,x).
\end{equation*}

\section{Results and discussion \label{ourRes}}

\begin{sidewaystable*}[!htb]
\begin{center}
\footnotesize

\caption{XLF implied volatilities for the call options.}
\label{TabOptC}
\begin{tabular}{|c|c|c|c|c|c|c|c|}
\specialrule{.1em}{.05em}{.05em}
T & \multicolumn{7}{|c|}{K, Put} \cr
\cline{2-8} & 18   & 19    & 20    & 21     & 21.5  & 22        & 23    \cr
\specialrule{.1em}{.05em}{.05em}
4/4/2014   &  -    & -     & 39.53 & 23.77  & 19.73 & 16.67 & - \cr
\hline
4/19/2014  &  -    & 32.90 & 26.79 &  20.14 & -     & 15.19 & 12.93 \cr
\hline
5/17/2014  & 33.27 & 26.88 & 23.08 & 18.94  & -     & 16.12 & 13.86  \cr
\hline
6/21/2014  & 27.84 & 23.90 & 21.07 & 18.88  & -     & 16.95 & 15.82   \cr
\hline
7/19/2014  & 26.09 & 22.81 & 20.29 & 18.13  &  -    & 16.30 & 14.93   \cr
\hline
9/20/2014  & 24.20 & 22.23 & 20.32 & 18.76  & -     & 17.40 & 16.41   \cr
\specialrule{.1em}{.05em}{.05em}
\end{tabular}

\bigskip

\caption{XLF implied volatilities for the put options.}
\label{TabOptP}
\begin{tabular}{|c|c|c|c|c|c|c|c|c|c|c|}
\specialrule{.1em}{.05em}{.05em}
T & \multicolumn{10}{|c|}{K, Call} \cr
\cline{2-11} & 21  & 21.5  &  22   &  22.5 & 23    & 24    & 25    & 26    & 27    & 28 \cr
\specialrule{.1em}{.05em}{.05em}
4/4/2014   &  -    & 16.60 & 14.69 & 14.40 & 14.86 & -     &  -    & -     & -     & -  \cr
\hline
4/19/2014  &  -    & -     & 15.79 & -     & 13.38 & 15.39 & -     & -     & -     & - \cr
\hline
5/17/2014  & 16.71 & -     & 14.48 & -     & -     & 13.75 & -     & -     & -     & - \cr
\hline
6/21/2014  & 16.31 & -     & 14.78 & -     & -     & 13.92 & 14.28 & 16.58 & -     & - \cr
\hline
7/19/2014  & 16.82 & -     & 15.24 & -     & -     & 14.36 & 14.19 & 15.20 & -     & -  \cr
\hline
9/20/2014  & 17.02 & -     & 15.84 & -     & -     & 14.99 & 14.56 & 14.47 & 14.97 &16.31 \cr
\specialrule{.1em}{.05em}{.05em}
\end{tabular}

\bigskip

\caption{Typical time to converge (per strike) using various algorithms for computing $\B(X,T)$.}
\label{TabBench}
\begin{tabular}{|c|c|c|c|}
\hline
Method & $T \ll 1$ &  $|a| >> 1$ & general \cr
\hline
Time, sec & 1.0-1.4 & 1-7 & 5-7 \cr
\hline
\end{tabular}

\end{center}
\end{sidewaystable*}

In our numerical test we use the same data set as in \cite{ItkinSigmoid2015}, i.e.,
we take data from \url{http://www.optionseducation.org} on XLF traded at NYSEArca on March 25, 2014. The spot price of the index is $S = 22.64$, and $r = 0.0148,\  q=0.01$. The option implied volatilities (IV) are given in Tables~\ref{TabOptC},\ref{TabOptP}. We take all OTM quotes and some ITM quotes which are very close to the at-the-money (ATM).

When strikes for calls and puts coincide, we take an average of $I_{call}$ and $I_{put}$ with weights proportional to $1 - |\Delta|_c$ and $1-|\Delta|_p$ respecitvely, where $\Delta_c, \Delta_p$ are option call and put deltas~\footnote{By doing so we do take into account effects reported in \cite{callPutIV}, who pointed out that the IVs calculated from  call and  put option prices corresponding to the same strike  do not coincide,  although  they  should  be  equal  in theory. Our weights are chosen according to a pure empirical rule of thumb, and a more detailed investigation of this effect is required.}.

We have already mentioned that in our model for each term the slopes of the smile at plus and minus infinity, $v^1_{j,n_j}$ and $v^1_{j,0}$, are free parameters. So often traders have an intuition about these values. However, in our numerical experiments we took just some plausible values for them, which are given in Table~\ref{TabSlopes}.

\begin{table}[H]
\begin{center}
\begin{tabular}{|c|c|c|c|}
\hline
$j$ & $T_j$ & $v^1_{j,0}$ & $v^1_{j,n_j}$ \cr
\hline
1 & 4/04/2014  & -0.1206 & 0.1000 \cr
\hline
2 & 4/19/2014  &  -0.1000 &  0.1000 \cr
\hline
3 & 5/17/2014  &  -0.1309 &  0.1000 \cr
\hline
4 & 6/21/2014  &  -0.1000 &  0.1000 \cr
\hline
5 & 7/19/2014  &  -0.1000 &  0.1000 \cr
\hline
6 & 9/20/2014  &  -0.1000 &  0.1000 \cr
\hline
\end{tabular}
\caption{Parameters $v^1_{j,0}$ and $v^1_{j,n_j}$ for the option data in Table~\ref{TabOptC}, \ref{TabOptP}.}
\label{TabSlopes}
\end{center}
\end{table}

\begin{figure}[!htb]
\begin{center}
\fbox{\includegraphics[width = 4.5in]{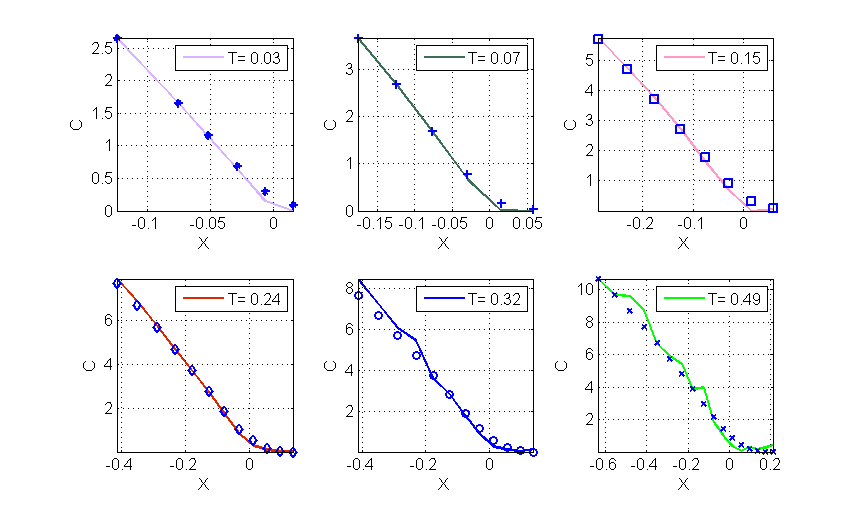}}
\caption{Term-by-term fitting of market prices constructed using the whole set of data in Tab.~\ref{TabOptC},\ref{TabOptP}.}
\label{FigCall}
\end{center}
\end{figure}
When calibrating the model to market data, we use the standard Matlab {\it fmincon} function. We start by using an "active-set" algorithm, and if it doesn't converge, switch to an "sqp" algorithm. We emphasize that optimization of this step is not a subject of this paper, and for a more detailed discussion of various problems related to the calibration of the local volatility surface we refer the reader to a recent paper \cite{Lindholm2014} and references therein. Therefore, here calibration is provided for pure illustrative purposes, and certainly a more sophisticated and powerful algorithm could be used to a greater effect.

The results of such a calibration are given in Fig.~\ref{FigCall}. Here each subplot corresponds to a single maturity $T$ (marked in the legend) and shows market data (discrete points) and computed values (solid line). This simple local calibration algorithm provides rather decent results, except for the vicinity of $X = -0.5$ in the last subplot.

For the first two maturities we successfully use the asymptotic method described in Section~\ref{short}. Then, for the next two maturities, the method described in Section~\ref{largeA} provides good results. Finally, for the last two maturities a combination of the general algorithm with that described in Section~\ref{largeA} has to be used.

The local variance curves obtained as a result of this fitting are given term-by-term in Fig.~\ref{termXLF}. The corresponding local variance surface is represented in Fig.~\ref{lvXLF}
\begin{figure}[!htb]
\begin{center}
\fbox{\includegraphics[width=3.5in]{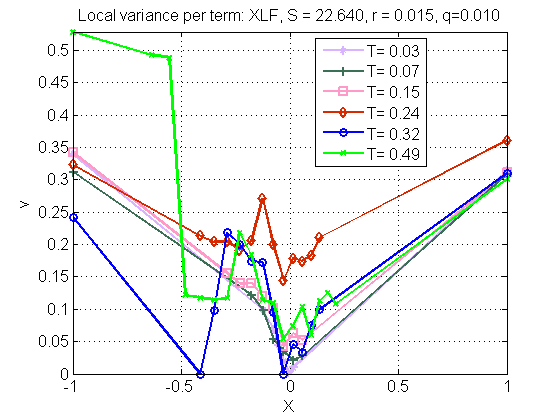}}
\caption{Term-by-term fitting of the local variance.}
\label{termXLF}
\end{center}
\end{figure}

\begin{figure}[!htb]
\begin{center}
\fbox{\includegraphics[width=4in, height=3in]{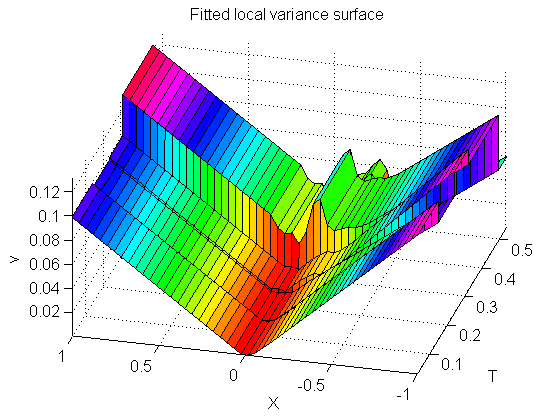}}
\caption{The local variance surface constructed using the proposed approach.}
\label{lvXLF}
\end{center}
\end{figure}

It can be seen that the local variance is positive everywhere on the grid, so that our construction is arbitrage-free.

Performance-wise the proposed algorithm is reasonably efficient. Indeed, we ran our tests in Matlab using two Intel Quad-Core i7-4790 CPUs, each of 3.80 Ghz. As was mentioned in the previous section, the calibration time strongly depends on the method chosen to compute $\B(X,T)$. Typical results are given in Table~\ref{TabBench}. These results are normalized per number of strikes for a given term. Obviously, they could be considered just as a crude estimation, since the convergence strongly depends on the quality of the initial guess. In our calculations we used the approach described in Section~\ref{initGuess}. Still, it can be seen that the second method in Table~\ref{TabBench} is slower than the first one as it requires the evaluation of the Bessel functions. The third method requires multiple computations of Kummer''s functions and is the slowest one. However, as we use the Gaver-Stehfest algorithm, it can be fully parallelized. Same is true for the computation of Kummer''s functions in all points $X_i, \irg{i}{1}{n_j}$ for a given maturity $T_j$, which we do at every iteration of the calibration procedure. Therefore, having a sufficient number of cores, a potential speedup of the parallel implementation should be proportional to $N=12$ (the number of the Gaver-Stehfest algorithm time steps) times the number of strikes. In our case this provides the calibration in less than a second per maturity even when the general method is used.

\section{Conclusion}
In this paper we provide an extension of the approach proposed in LS2011 by replacing a tiled local variance shape with a piece-wise linear construction and relaxing their assumptions about zero interest rates and dividend yields. Yet our approach, which combines an application of the Carson-Laplace transform and solution of the resulting inhomogeneous ordinary differential equation in terms of Kummer''s hypergeometric functions, remains analytically tractable.

When solving the modified Dupire equation by utilizing the Carson-Laplace transform method, one must be cognizant of the following issue. To compute the source term $h(X) = p I_{12}(X)$ at some time step $j$ we need the function $B_{j-1}(X,\tau_{j-1})$ obtained at the previous time step. However, the market quotes for the maturities $T_j$ and $T_{j-1}$ could be given at different sets of $X$ even if the strikes $K$ are same, since, by definition, $X = \log (K/F)$ and $F = F(T)$. Therefore, we need the values of $B_{j-1}(X,\tau_{j-1})$ at certain points $X$ where they have not been calculated yet. LS2011 use interpolation to obtain the required values. However, this interpolation must be carefully constructed to preserve no-arbitrage. In this paper we proposed an interpolation which allows computation of the source terms in closed form, and proved that our interpolation does not create arbitrage.

In addition, we noticed that using the general algorithm for small maturities or steep local variance slopes often results in various inefficiencies and instabilities. Therefore, for these special cases we proposed alternative methods constructed by using asymptotic (regular or singular) expansions, which do not suffer from these issues. In our opinion, this is an interesting and practically important extension of the general methodology described in the previous two paragraphs.

Numerical experiments presented in the paper demonstrate robustness of our approach.
Obviously, closed-form solutions for the source terms and asymptotic solutions, expressed in terms of functions less computationally expensive than Kummer's functions, significantly speed up the calibration. The implementation could be made more efficient by using the internal parallelism of the Gaver-Stehfest algorithm, and the fact that Kummer's functions corresponding to different points $X_i, \irg{i}{1}{n_j}$ for a given maturity $T_j$ could be computed in parallel.

By its nature, our model (as well as any other LV model) provides just a fit for the current market snapshot, and does not consider any dynamics for the local volatility surface {\it itself}. While the latter issue should be investigated separately, our choice of the LV is parsimonious enough to greatly facilitate this endeavor.

\clearpage
\section*{Acknowledgments}
AI is grateful to Christoph Burgard for useful comments.

\section*{References}

\newcommand{\noopsort}[1]{} \newcommand{\printfirst}[2]{#1}
  \newcommand{\singleletter}[1]{#1} \newcommand{\switchargs}[2]{#2#1}

\clearpage

\begin{appendices}
\renewcommand{\theequation}{\Alph{section}.\arabic{equation}}
\setcounter{equation}{0}
\renewcommand{\theequation}{A.\arabic{equation}}

\section{Convergence of $I_{12}(X)$ for $X \to \pm \infty$. \label{app1}}

In this Appendix we prove the following Proposition:
\begin{proposition} \label{Prop1}
For $X \to \pm \infty$ the function $I_{12}(X)$, defined in \eqref{solInhom}, vanishes.
\end{proposition}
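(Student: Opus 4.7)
The plan is to analyze the two limits separately and to leverage the explicit asymptotics of the Kummer functions together with the known exponential decay of $B_{j-1}$ at $\pm\infty$. By the symmetry between the leftmost and rightmost intervals (obtained by flipping $k$ and swapping $z = \mu - X$ with $z = X - \mu$), I would only carry out the details for $X \to -\infty$ on the interval $A_1 = (-\infty, X_1]$ and then transcribe the result.

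On $A_1$ the fundamental pair is $y_1 = e^{X/2} z\, U(a+1,2,z)$ and $y_2 = e^{X/2} z\, M(a+1,2,z)$ with $z = \mu - X \to +\infty$. First I would invoke the standard large-$z$ asymptotics $U(a+1,2,z) \sim z^{-(a+1)}$ and $M(a+1,2,z) \sim e^{z} z^{a-1}/\Gamma(a+1)$, so $y_1(X)$ decays like $e^{X/2}(\mu-X)^{-a}$ while $y_2(X)$ grows like $e^{\mu-X/2}(\mu-X)^{a}/\Gamma(a+1)$. Combined with the constant Wronskian $W = -e^{\mu}/\Gamma(a+1)$ and $b_2+a_2 X \sim |v^1_{j,i}|X/2$, this gives sharp asymptotic rates for both integrands. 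Parallel to this, I would establish $B_{j-1}(X,\tau_{j-1}) \sim c\, e^{X/2}$ as $X \to -\infty$: for $j=1$ this is immediate from the initial datum $B_0(X,0) = e^{X/2}\mathbf{1}_{X\le 0}$, and for $j>1$ it follows from the definition $B = e^{-X/2}(KD - P)/Q$ together with $P \to 0$ as $K\to 0$, and is consistent with the piecewise-linear interpolation \eqref{int2} extended to the leftmost region by the auxiliary-point device of Section~\ref{secI12}.

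With these inputs, the integrand $y_1 B_{j-1}/((b_2+a_2 s)W)$ decays like $e^{s}(\mu-s)^{-a}/|s|$, so taking the lower limit $\xi = -\infty$ gives a convergent integral with $I_1(X) = O\!\left(e^X(\mu-X)^{-a}/|X|\right)$, whence $y_2(X)\, I_1(X) = O(e^{X/2}/|X|) \to 0$. The companion integrand $y_2 B_{j-1}/((b_2+a_2 s)W)$ is only polynomial, of order $(\mu-s)^{a}/|s|$; with any fixed lower limit $X_R$ the antiderivative satisfies $I_2(X) = O(|X|^{a})$, so $y_1(X)\, I_2(X) = O(e^{X/2}) \to 0$. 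Adding the two contributions yields $I_{12}(X) \to 0$ as $X \to -\infty$. The $X \to +\infty$ case is identical after the sign flip, using $B_{j-1}(X,\tau_{j-1}) \sim \tilde c\, e^{-X/2}$, which follows from $KD - P \to Q$ via put-call parity.

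The main obstacle is the bookkeeping of the leading-order cancellations: since $y_2$ grows exponentially while its companion integral grows polynomially, one must confirm that neither contribution introduces spurious boundary growth, which hinges on using different integration limits for the two integrals ($-\infty$ for the convergent one, a fixed $X_R$ for the polynomially divergent one) and verifying that any integration-constant mismatch lands in the homogeneous solution and not in $I_{12}$ itself. A secondary subtlety is making the leading behavior $B_{j-1} \sim e^{\pm X/2}$ uniform across time steps; I would settle this by induction on $j$ using the recursive construction of $\B_j$ and the fact that \eqref{int2}, together with the boundary-imposed vanishing of $\B_j$ at infinity, preserves the exponential rate at both wings.
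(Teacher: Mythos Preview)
Your argument is correct and takes a somewhat different, more elementary route than the paper's own proof. The paper writes out the combination $G(z)=e^{-z/2}M\!\int e^{-z/2}U\,dz - e^{-z/2}U\!\int e^{-z/2}M\,dz$, inserts the full asymptotic \emph{series} for $U(a+1,2,z)$ and $M(a+1,2,z)$ at $z\to\infty$, integrates term-by-term via incomplete gamma functions, and then checks that the leading contributions cancel down to $G(z)\sim z^{-2}$. By contrast, you never form the combination: you estimate $y_2\,I_1$ and $y_1\,I_2$ separately using only the leading asymptotics $U\sim z^{-a-1}$, $M\sim e^{z}z^{a-1}/\Gamma(a+1)$, and show each product is $O(e^{X/2})$ or better. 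This avoids the series bookkeeping and the cancellation step entirely, at the price of having to justify the mixed choice of integration limits ($\xi=-\infty$ for the convergent integral, $\xi=X_R$ fixed for the polynomially growing one). You handle that correctly by observing that any mismatch is a linear combination of $y_1,y_2$ and hence is absorbed into $C_1,C_2$ rather than into $I_{12}$; the paper sidesteps this issue by working with indefinite antiderivatives throughout. Your reduction of the $j>1$ case to $B_{j-1}\sim e^{\pm X/2}$ via $B=e^{-X/2}(KD-P)/Q$ with $P\to 0$ as $K\to 0$ (and put--call parity at $K\to\infty$) matches the paper's reduction, which invokes the boundary behavior of $\B$ to argue that the general $I_{12}$ has the same asymptotics as $h_1,h_2$.
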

\begin{proof}
First, we intend to prove this Proposition for $j=1$. In this case the \eqref{solInhom} has the form
\begin{align} \label{total}
\B &=
\begin{cases}
C_1 y_1 + C_2 y_2 + p h_1(X), & X \le 0, \\
C_3 y_1 + C_4 y_2 + p h_2(X), & X > 0,
\end{cases}
\\
h_i(X) &= y_2 I_1(g_i(X)) - y_1 I_2(g_i(X)), \ i=1,2, \nonumber \\
g_1(X) &= e^{X/2}, \quad g_2(X) = e^{-X/2}, \nonumber \\
I_s(g_l(X)) &= \int_\xi^X y_s \dfrac{g_l(X)}{(b_2 + a_2 X)W}d X,
\quad \irg{s,l}{1}{2}.\nonumber
\end{align}
Thus, in this case $I_{12}(X) = h_1(X)$ if $X \le 0$, and  $I_{12}(X) = h_2(X)$ if $X > 0$. Once this is done, due to the boundary conditions at $X \to -\infty$, the function $\B(X,\tau_j)$ in \eqref{solInhom} tends to $g_1(X)$ in \eqref{total}, and at $X \to \infty$ we have $\B(X,\tau_j) \to g_2(X)$. Therefore, at $X \to -\infty$ we have $I_{12}(X) \to h_1(X)$, and  at $X \to \infty$, similarly  $I_{12}(X) \to h_2(X)$. Thus, the first step of the proof is sufficient to prove the Proposition in its entirety. At $X \to - \infty$ (according to Section~\ref{constr1} this region belongs to the area where $v^1_{j,i} < 0$) we have $z \to \infty$, and, as follows from Table~\ref{TabWhole} and  \eqref{total}
\begin{align} \label{ineq}
I_1(g_1(X)) &=  \int \dfrac{y_1(X) g_1(X)}{(b_2 + a_2 X)W}d X
= \dfrac{\Gamma(a+1)}{a_2} e^{-\mu/2} \int  e^{-z/2} M(1+a,2,z) d z, \\
I_2(g_1(X)) &=  \int \dfrac{y_2(X) g_1(X)}{(b_2 + a_2 X)W}d X = \dfrac{\Gamma(a+1)}{a_2} e^{-\mu/2} \int e^{-z/2} U(a+1,2,z) d z. \nonumber
\end{align}
Thus,
\begin{align} \label{h1}
h_1(z) &= \dfrac{\Gamma(a+1)}{a_2} G(z), \\
G(z) &\equiv e^{-z/2} M(1+a,2,z) \int e^{-z/2} U(a+1,2,z) d z \nonumber \\
&- e^{-z/2} U(1+a,2,z) \int  e^{-z/2} M(1+a,2,z) d z. \nonumber
\end{align}
From \cite{Olver1997}, at $z \to \infty$ we have the following asymptotic series representation
\begin{align} \label{series}
U(a,2,z) &= \Phi_\infty (z), \qquad \Phi_n (z) \equiv z^{-a} \sum_{s=0}^n \dfrac{(a(a-1))_s}{s!}(-z)^{-s}, \\
M(a,2,z) &= \Psi_\infty (z), \qquad \Psi_n (z) \equiv \dfrac{e^{z} z^{a-2}}{\Gamma(a)} \sum_{s=0}^n \dfrac{(1-a)_s (2-a)_s}{s!}z^{-s}, \nonumber
\end{align}
\noindent where $(\cdot)_s$ is the Pochhammer symbol.

Let us define the function $G_n(z)$ in the same way as $G(z)$ in \eqref{h1}, but replacing $U(a,2,z) = \Phi_\infty(z)$ with $\Phi_n(z)$. It is clear that $\lim_{n \to \infty} G_n(z) = G(z)$. Substituting \eqref{series} into this definition and performing integration term-by-term, we arrive at
\begin{align} \label{afterInt}
\int e^{-z/2} \Phi_n(z) &= - 2^{-a}\sum_{s=0}^n f(a,s) 2^{n-s} (-1)^s \Gamma(-s-a,z/2), \\
\int e^{-z/2} \Psi_n(z) &= - (-2)^{a} \sum_{s=0}^n f(a,s) 2^{n-s} (-1)^s \Gamma(-s+a,-z/2), \nonumber \\
f(a,s) &= \dfrac{(a(a-1))_s}{s!}. \nonumber
\end{align}
\noindent where $\Gamma(a,z)$ is an incomplete gamma function. By \cite{Olver1997}, at $z \to \infty$ we have
\[ \Gamma(a,z) = z^{a-1} e^{-z} \sum_{s=0}^\infty (-1)^s \frac{(1-a)_s}{z^s}. \]
Substituting this expression into \eqref{afterInt} and collecting terms, we can check that the leading term in this series is $G_n(z) \sim z^{-2}$. Thus, $G_n \to 0$ at $z \to \infty$ as $1/z^2$. Since this convergence rate doesn't depend on $n$, we can take the limit $n \to \infty$ and see that $G(z) \to 0$ at $z \to  \infty$. Since at $k = 1/2$ we have $z = \mu - X$, that means that that $h_1(X) \to 0$ for $X \to -\infty$.

For $h_2(x)$ the representation for $I_1(g_2(X)), I_2(g_2(X))$ is similar to that in \eqref{ineq} and reads
\begin{align*}
I_1(g_2(X)) &=  \int \dfrac{y_1(X) g_2(X)}{(b_2 + a_2 X)W}d X = -\dfrac{\Gamma(a+1)}{a_2} e^{-\mu/2} \int  e^{-z/2} U(1+a,2,z) d z, \\
I_2(g_2(X)) &=  \int \dfrac{y_2(X) g_2(X)}{(b_2 + a_2 X)W}d X = - \dfrac{\Gamma(a+1)}{a_2} e^{-\mu/2} \int  e^{-z/2} M(1+a,2,z) d z. \nonumber
\end{align*}
Since we need the limit $z \to \infty$, the convergence of these integrals to zero can be proved similarly to the previous case of $z \to -\infty$. Thus, $h_2(X) \to 0$ for $X \to \infty$.
\hfill $\blacksquare$
\end{proof}

\section{Closed form solution for $I_{12}(X)$ \label{app2}}

Here we derive an analytical expression for $I_{12}(X)$ in \eqref{solInhom}, which takes into account our approximation of $B(X,\tau_{j-1})$ presented in Section~\ref{secI12}, and reads
\begin{align*}
I_{12}(X) &= y_2 I_1(X) - y_1 I_2(X), \\
I_1(X) &= \int_\xi^X \dfrac{y_1 B_{j-1}(X,\tau_{j-1})}{(b_2 + a_2 X)W}d X \nonumber \\
&= \int_\xi^X \dfrac{y_1 \BP }{(b_2 + a_2 X)W}d X, \nonumber \\
I_2(X) &= \int_\xi^X  \dfrac{y_2 B_{j-1}(X,\tau_{j-1})}{(b_2 + a_2 X)W}d X
\nonumber \\
&= \int_\xi^X \dfrac{y_2 \BP}{(b_2 + a_2 X)W}d X. \nonumber
\end{align*}
Suppose we compute these integrals on the interval $[X_i, X_{i+1}]$, i.e. $X \in [X_i,X_{i+1}]$. As the lower limit of integration $\xi$ it is convenient to choose $\xi = X_i$. Then the coefficient $a_2, b_2$ are constant on this interval, and so are $a, \alpha, \beta$. The homogeneous solutions $y_1, y_2$ should be chosen according to the analysis of Section~\ref{DupireSec}.

\paragraph{\bf $v_{j,i}^1 < 0$} According to Table~\ref{TabWhole}, for negative $v_{j,i}^1$ we have
\begin{align*}
y_1 &= z e^{X/2} U(a+1,2,z), \qquad y_2 = z e^{X/2} M(1+a,2,z), \\
W &= -\dfrac{e^{\mu} }{\Gamma(a_i+1)}, \qquad z = \mu - X. \nonumber
\end{align*}
Therefore,
\begin{align*}
I_2 &= - \Gamma(a+1) e^{-\mu} \int \dfrac{e^{X/2} z M(1+a,2,z)}{b_2 + a_2 X} \BP d X \nonumber \\
&= \dfrac{\Gamma(a+1)}{a_2}\left[\amt J_0 + \bpt e^{\mu} J_1 + e^{\mu} \bpo J_2\right], \\
J_0 &= \int M(1+a,2,z) d z, \quad J_1 = \int e^{-z} M(1+a,2,z) d z, \nonumber \\
J_2 &= \int (\mu - z) e^{-z} M(1+a,2,z) d z = \mu J_1 - J_3, \nonumber \\
J_3 &= \int z e^{-z} M(1+a,2,z) d z. \nonumber
\end{align*}
From \cite{kummerInt1970} after some transformations we obtain
\begin{align*}
J_1 &= \int e^{-z} M(1+a,2,z) d z = \dfrac{1}{a} e^{-z} M(1+a,1,z),  \\
J_0 &= \int  M(1+a,2,z) d z = \dfrac{1}{a} M(a,1,z), \nonumber \\
J_3 &= \int  z e^{-z} M(1+a,2,z) d z = \dfrac{1}{2} z^2 e^{-z} M(a+2,3,z). \nonumber
\end{align*}
Similarly,
\begin{align*}
I_1 &= - \Gamma(a+1) e^{-\mu} \int \dfrac{e^{X/2}z U(1+a,2,z)}{b_2 + a_2 X}
\BP d X  \nonumber \\
& = \dfrac{\Gamma(a+1)}{a_2} e^{-\mu}
\left[ \amt {\mathcal J}_0 + \bpt e^{\mu} {\mathcal J}_1 +  \bpo e^{\mu} {\mathcal J}_2 \right], \\
{\mathcal J}_0 &= \int U(1+a,2,z) d z, \qquad
{\mathcal J}_1 = \int e^{-z} U(1+a,2,z) d z, \nonumber \\
{\mathcal J}_2 &= \int X e^{-z} U(1+a,2,z) d z = \mu {\mathcal J}_1 - {\mathcal J}_3,   \quad
{\mathcal J}_3 = \int z e^{-z} U(1+a,2,z) d z. \nonumber
 \end{align*}
Again, from \cite{kummerInt1970} we can obtain
\begin{align*}
{\mathcal J}_0 &= \int U(a+1,2,z) d z = -\dfrac{1}{a} U(a,1,z), \\
{\mathcal J}_1 &= \int e^{-z} U(a+1,2,z) d z = -e^{-z} U(a,1,z), \\
{\mathcal J}_3 &=\int z e^{-z} U(a,2,z) d z = -z^2 e^{-z} U(a+2,3,z).
\end{align*}

\paragraph{\\ \bf $v_{j,i}^1 > 0$} According to Table~\ref{TabWhole}, for positive $v_{j,i}^1$ we have
\begin{align*}
y_1 &= z e^{-X/2} U(a+1,2,z), \qquad  y_2 = z e^{-X/2} M(1+a,2,z), \\
W &= -\dfrac{e^{\mu} }{\Gamma(a+1)}, \qquad z = \mu + X. \nonumber
\end{align*}
Hence
\begin{align*}
I_2 &= - \Gamma(a+1) e^{-\mu}\int \dfrac{z e^{-X/2} M(1+a,2,z) \BP}{b_2 + a_2 X} d X  \nonumber \\
&= -\dfrac{\Gamma(a+1)}{a_2} e^{-\mu}
\left[ \amt e^{\mu}{\mathcal I}_0 + \bpt {\mathcal I}_1 +  \bpo {\mathcal I}_2 \right], \\
{\mathcal I}_0 &= \int e^{-z} M(1+a,2,z) dz = J_1, \quad
{\mathcal I}_1 = \int M(1+a,2,z) dz = J_0, \nonumber \\
{\mathcal I}_2 &= \int (z-\mu) M(1+a,2,z) d z = {\mathcal I}_3 - \mu J_0, \nonumber \\
{\mathcal I}_3 &= \int z M(1+a,2,z) d z =
\dfrac{z-1}{a}M(a,1,z) + \dfrac{1}{a} M(a-1,1,z). \nonumber
\end{align*}
Similarly,
\begin{align}
I_1 &= - \Gamma(a+1) e^{-\mu} \int \dfrac{z e^{-X/2} U(1+a,2,z) \BP}{b_2 + a_2 X} d X  \nonumber \\
&= -\dfrac{\Gamma(a+1)}{a_2} e^{-\mu} \left[ e^{\mu}\amt {\mathcal P}_2 +
\bpt {\mathcal P}_0 +  \bpo ({\mathcal P}_3 - \mu {\mathcal P}_0) \right], \nonumber \\
{\mathcal P}_0 &= \int U(1+a,2,z) dz = {\mathcal J}_0, \qquad
{\mathcal P}_2 = \int e^{-z} U(1+a,2,z) dz = {\mathcal J}_1, \nonumber \\
{\mathcal P}_3 &= \int z U(1+a,2,z) dz =
-\dfrac{z}{a}\left(U(a,1,z) + \frac{1}{a-1} U(a,2,z) \right). \nonumber
\end{align}

\end{appendices}

\end{document}